\newcommand{\sech}{\mathrm{sech}}
\newtheorem{lemma}{Lemma}
\newtheorem{proposition}{Proposition}
\newcommand{\ba}{\begin{array}}
\newcommand{\ea}{\end{array}}
\newcommand{\eps}{\epsilon}
\begin{document}

\title{Characterizing Traveling Wave Collisions in Granular  Chains Starting from Integrable Limits: the case of the KdV and the Toda Lattice}
%\author{The Author}
%\date{} % A ctivate to display a given date or no date (if empty),
         % otherwise the current date is printed 

\author{Y. Shen}
\affiliation{Department of Mathematical Sciences, University of Texas at Dallas,
Richardson, TX 75080, USA}

\author{P. G. Kevrekidis}
\affiliation{Department of Mathematics and Statistics, University of Massachusetts,
Amherst, MA 01003-4515, USA}

\author{S. Sen}
\affiliation{Department of Physics, State University of New York, Buffalo, New York 14260-1500, USA}

\author{A. Hoffman}
\affiliation{Franklin W. Olin College of Engineering, Needham, MA 02492}

\begin{abstract}
Our aim in the present work is to develop approximations for the collisional
dynamics of traveling waves in the context of granular chains in the 
presence of  precompression. To that effect, we aim to quantify approximations
of the relevant Hertzian FPU-type lattice through both the Korteweg-de Vries
(KdV) equation and the Toda lattice. Using the availability in such 
settings of both 1-soliton and 2-soliton solutions in explicit analytical
form, we initialize such coherent structures in the granular chain
and observe the proximity of the resulting evolution to the underlying
integrable (KdV or Toda) model. While the KdV offers the possibility
to accurately capture collisions of solitary waves propagating in the
same direction, the Toda lattice enables capturing both co-propagating
and counter-propagating soliton collisions. The error in the approximation
is quantified numerically and connections to bounds established
in the mathematical literature are also given. 
\end{abstract}

\maketitle

\section{Introduction}

Granular crystals are material systems based on the assembly of particles in 
one-, two- and three-dimensions inside a matrix (or a holder) in ordered 
closely packed configurations in which the grains are in contact with each other \cite{Coste1997,Coste1999,Coste2008,Sen2005,Daraio2004,Daraio2006,deBilly2000,DeBilly2006,Gilles2003,Nesterenko1983,Sen1998,Nesterenko2001,Porter2008,Porter2009,Rosas2004,Sen2008,Shukla1991,Molinari2009,Herbold2007}.
The fundamental building blocks constituting such systems are macroscopic 
particles of spherical, toroidal, elliptical or cylindrical shapes \cite{Herbold2007b}, arranged in different geometries. The mechanical, and more 
specifically dynamic, properties of these systems are governed by the stress 
propagation at the contact between neighboring particles.  This confers to the overall system a highly nonlinear response dictated, in the case of particles 
with an elliptical or spherical 
contact, by the discrete Hertzian law of contact interaction \cite{Hertz1881,Johnson1985,Sun2011}. Geometry and/or material anisotropy between particles composing the systems allows for the observation of interesting dynamic 
phenomena deriving from the interplay of discreteness and nonlinearity 
of the problem (i.e. anomalous reflections, breathers, energy trapping and impulse fragmentation) \cite{Sen1998,Nesterenko2001,Herbold2007,Daraio2006b,Hong2005,Hong2002,Job2005,Melo2006,Nesterenko2005,Boechler2010,Job2009,Theocharis2009,chong2013,sen1,sen2,sen3,sen4,vakakis1,vakakis2,vakakis3}. These findings open up a large parameter space for new materials design with unique properties sharply 
departing from classical engineering systems. 

One of the prototypical excitations that have been found to arise in
the granular chains are traveling solitary waves, which have been
extensively studied both in the absence~\cite{Nesterenko2001,Sen1998,Sen2008}
(see also~\cite{pegoenglish,ahnert,stefanov1} for a number of 
recent developments), as well as in 
the presence~\cite{stefanov2} of the so-called precompression. 
The precompression is an external strain a priori imposed on the
ends of the chain, resulting in a displacement of the particles
from their equilibrium position.
As has been detailed in these works, the profile of these traveling
waves is fundamentally different in the former, in comparison to the
latter case. Without precompression, waves exist for any speed, 
featuring a doubly exponential (but not genuinely compact) decay law,
while in the case with precompression, waves are purely supersonic
(i.e., exist for speeds beyond the speed of sound in the medium) and
decay exponentially in space. 

In fact, the FPU type lattices such as the one arising also from the Hertzian
chain in the presence of precompression have been studied extensively 
(see~\cite{bookfpu} and 
 references cited therein for an overview of the history of the FPU 
model).  It is known, both formally~\cite{Kruskal1963} and rigorously~\cite{Schneider1999} (on long but finite time scales) that KdV approximates FPU $\alpha$-type lattices 
for small-amplitude, long-wave, low-energy initial data.  This fact has been used in the mathematical literature to determine the shape~\cite{pego1} and dynamical stability \cite{pego2,pego3,pego4} of solitary waves and even of their interactions~\cite{hoffman}.  We remark that the above referenced remarks in the mathematical literature are valid ``for $\epsilon$ sufficiently small'',
where $\epsilon$ is a parameter characterizing the amplitude and
inverse width, as well as speed of the waves above the medium's
sound speed.  One of the aims of the present work is to 
%numerically validate the theory for 
determine the range of the parameter $\epsilon$ 
for which this theory can be numerically validated, an observation that,
in turn, would be of considerable use
to ongoing granular chain experiments.

It is that general vein of connecting the non-integrable
traveling solitary wave interactions of the granular chain
(that can be monitored experimentally) with the underlying 
integrable (and hence analytically tractable) approximations,
that the present work will be following. In particular, 
our aim is to quantify approximations of the Hertzian contact
model to two other models, one continuum and one discrete in which
soliton and multi-soliton solutions are analytically available.
These are, respectively, the KdV equation and the Toda lattice.
The former possesses only uni-directional waves.  Since Hamiltonian lattices are time-reversible, a single KdV equation cannot capture the evolution of general initial data.  It is typical to use a pair of uncoupled KdV equations, one moving rightward and one moving leftward to capture the evolution of general 
initial data~\cite{Schneider1999}.  On
the other hand, the Toda lattice has several benefits as an
approximation of the granular problem. Firstly,
it is inherently discrete, hence it is not necessary to 
use a long wavelength type approximation that is relevant
for the applicability of the KdV reduction~\cite{pego1,dcds}.
Secondly, the Toda lattice admits two-way wave propagation, hence a single equation can capture the evolution of all (small amplitude) initial data.  

Once these approximations are established,
we will ``translate'' two-soliton solutions, as well as superpositions 
of 1-soliton solutions of the integrable models 
into initial conditions of the granular
lattice and will dynamically evolve and monitor their interactions in comparison
to what the analytically tractable approximations (KdV and Toda)
yield for these interactions. We will explore how the error
in the approximations grows, as a function of the amplitude of
the interacting waves, so as to appreciate the parametric regime
where these approximations can be deemed suitable for understanding
the inter-soliton interaction. We believe that such findings will
be of value to theorists and experimentalists alike. On the 
mathematical/theoretical side, they are relevant for appreciating
the limits of applicability of the theory and the sharpness
of its error bounds. On the experimental side, these explicit
analytical expressions provide a yardstick for quantifying 
solitary wave collisions (at least within an appropriate regime)
in connection to the well-characterized by now 
direct observations~\footnote{It is relevant to mention here
that recent developments have enabled a quantitative characterization of
the full displacement and velocity field~\cite{jkyang} and hence offer a ground
for concrete comparisons between experiments and theory for the
waves and their collisions in the setups considered herein.}. 

Our presentation will be structured as follows. In section II, we will
present the analysis and comparisons for the KdV reduction. In section III,
we will do the same for the Toda lattice, examining in this case
both co-propagating
and counter-propagating soliton collisions. Finally, in section IV, we will
summarize our findings and present some conclusions, as well as some 
directions for future study. In the Appendix, we will present some
rigorous technical aspects of the approximation of the FPU solution
by the Toda lattice one.

\section{Connecting the Granular Chain and its Soliton Collisions to the KdV}

Our starting point here will be an adimensional, rescaled form of
the granular lattice problem, with precompression 
$\delta_0$~\cite{Nesterenko2001,stefanov2} that reads:
\begin{equation}
\ddot{y}_n = [\delta_0+y_{n-1}-y_n]^p_+ - [\delta_0+y_{n}-y_{n+1}]^p_+
\end{equation}
where $y_n$ is the displacement of the $n$-th particle from equilibrium, and 
$[x]_+ = max\{0, x\}$. Defining the strain variables as
$u_n = y_{n-1}-y_n$, we obtain the symmetrized strain equation:
\begin{equation}
\ddot{u}_n = [\delta_0+u_{n-1}]^p_+ - 2[\delta_0+u_{n}]^p_+ + [\delta_0+u_{n+1}]^p_+.
\label{eqn2}
\end{equation}
In the context of the KdV approximation~\cite{pego1,pego2,pego3,pego4} 
(see also more recently and more specifically to the granular 
problem~\cite{dcds}), we seek traveling waves at the long wavelength
limit, which is suitable for the consideration of a continuum limit.
We thus use the following spatial and temporal scales 
$X = \epsilon n$, $T = \epsilon\delta_0^{\frac{p-1}{2}} t$. Assuming then
a strain pattern depending on these scales 
$u_n(t) = A(X,T)$, we get
\begin{equation}
\delta_0^{p-1}\partial_T^2 A = \partial_X^2[(\delta_0+A)^p] +\frac{\epsilon^2}{12}\partial_X^4[(\delta_0+A)^p]+\frac{\epsilon^4}{360}\partial_X^6[(\delta_0+A)^p]+\cdots,
\end{equation}
while by consideration of the variable $B = \frac{A}{\delta_0}<1$
measuring the strain as a fraction of the precompression, we can
also use the expansion of the nonlinear term as:
\begin{equation}
(\delta_0+A)^p = \delta_0^p(1+B)^p = \delta_0^p[1+ pB +\frac{1}{2}p(p-1)B^2+\cdots].
\end{equation}
This finally yields:
\begin{equation}
\partial_T^2 B = \partial_X^2[pB +\frac{1}{2}p(p-1)B^2+\cdots] +\frac{\epsilon^2}{12}\partial_X^4[pB +\frac{1}{2}p(p-1)B^2+\cdots]+\frac{\epsilon^4}{360}\partial_X^6[pB +\frac{1}{2}p(p-1)B^2+\cdots]+\cdots.
\end{equation}
Now consider $B(X,T) = B(\xi,\tau)$, with $\xi = X-cT$, $c = \sqrt{p}$, $\tau = \alpha c T$, with $\alpha$ a small parameter, we get 
\begin{equation}
0 = (2\alpha\partial_\xi\partial_\tau-\alpha^2\partial_\tau^2)B + \partial_\xi^2[\frac{1}{2}(p-1)B^2+\cdots] +\frac{\epsilon^2}{12}\partial_\xi^4[B +\frac{1}{2}(p-1)B^2+\cdots]+\frac{\epsilon^4}{360}\partial_\xi^6[B +\frac{1}{2}(p-1)B^2+\cdots]+\cdots.
\end{equation}
We now proceed to drop lower order terms such as 
$O(\alpha^2B)$, $O(\epsilon^2 B^2)$, $O(B^3)$, $O(\epsilon^4B)$, and thus
obtain the KdV approximation of the form:
\begin{equation}
2\alpha\partial_\tau B + \frac{1}{2}(p-1)\partial_\xi(B^2) +\frac{\epsilon^2}{12}\partial_\xi^3 B=0. \label{KdV0}
\end{equation}
Eq. (\ref{KdV0}) after the transformations $\tilde{\tau} = 2^{-\frac{6}{5}}3^{-\frac{2}{5}}(p-1)^{\frac{3}{5}}\alpha^{-1}\epsilon^{-\frac{2}{5}}\tau$, $B = 2^{\frac{1}{5}}3^{\frac{2}{5}}(p-1)^{-\frac{3}{5}}\epsilon^{\frac{2}{5}}\tilde{B}$, $\tilde{\xi} = 2^{\frac{3}{5}}3^{\frac{1}{5}}(p-1)^{\frac{1}{5}}\epsilon^{-\frac{4}{5}}\xi$, can
be converted to the standard form: 
\begin{equation}
\tilde{ B }_{\tilde\tau}+ 3\partial_{\tilde{\xi}}(\tilde{B}^2) +\partial_{\tilde{\xi}}^3 \tilde{B}=0 \label{KdV}
\end{equation}
which has one soliton solutions as:
\begin{equation}
\tilde{B} = 2k^2 \sech^2[k(\tilde{\xi}-4k^2\tilde{\tau})],
\end{equation}
as well as  two soliton solutions given by:
\begin{equation}
\tilde{B} = 8\frac{k_1^2 f_1+k_2^2 f_2 + 2 (k_2-k_1)^2 f_1 f_2 + m (k_2^2 f_1^2 f_2 + k_1^2 f_1 f_2^2 )}{(1+f_1 + f_2 +m f_1 f_2)^2}.
\end{equation}
Here, $f_i = e^{2k_i(4 k_i^2 \tilde \tau - \tilde \xi +s_i)}$, and $m = [(k_2-k_1)/(k_2+k_1)]^2$; see e.g.~\cite{hir,wahl}, as well as the more recent
work of \cite{Marchant}, for more details on multi-soliton solutions
of the KdV. If the initial positions of the two solitons 
satisfy $s_1<s_2$, we need $k_1>k_2$ for the two solitons to collide. 

A typical example of the approximation of collisional dynamics of the solitons
in the granular chain through the KdV is shown in 
Figs.~\ref{twosoliton} and~\ref{twosoliton_image}. The first figure shows
select snapshots of the profile of the two waves in the strain variable
$u_n$ presenting the comparison of the analytical KdV approximation shown
as a dashed (blue) line with the actual numerical granular chain 
evolution,
of Eq.~(\ref{eqn2}) [shown by solid (red) line]. 
In this, as well as in all the cases that follow, we use the
rescalings developed above (and also for the Toda lattice below)
to transform the integrable model solution into an approximate solution
for the granular chain and initialize in our granular crystal numerics
that solution at $t=0$. I.e., the analytical and numerical results share
the same initial condition and their observed/measured differences are
solely generated by the dynamics.
It is clear that the KdV limit properly captures the
individual propagation of the waves and is proximal not only qualitatively
but even semi-quantitatively to the details of the inter-soliton interaction,
as is illustrated from the middle and especially the bottom panels of the 
figure. Nevertheless, there is a quantative discrepancy in tracking the 
positions of the solitary waves, especially so after the collision.
The second figure shows a space-time plot of the very long scale of the
observed time evolution. It is clear from the latter figure that small
amplitude radiation (linear) waves are present in the actual
granular chain, while such waves are absent in the KdV limit, due to
its integrable, radiationless soliton dynamics. 
In fact, these linear radiation 
waves are also clearly discernible as small amplitude
``blips'' in Fig.~\ref{twosoliton}. We believe that the very long
time scales of the interaction of the waves enable numerous ``collisions''
also with these small amplitude waves thereby apparently reducing the
speed of the larger waves in comparison to their KdV counterparts, 
as is observed in the bottom panels of Fig.~\ref{twosoliton}. In that
light, this is a natural consequence of the non-integrability of
our physical system in comparison to the idealized KdV limit. 
Nevertheless, we believe that the latter offers a very efficient 
means for monitoring the solitary wave collisions even semi-quantitatively.
As a final comment on this comparison, we would like to point out that
because of the very slow (long time) nature of the interaction,
we are monitoring the dynamics in a periodic domain, merely for
computational convenience.

The position shifts of the KdV two-soliton solution after the collision are given by $\frac{1}{k_1}\ln{\frac{k_1+k_2}{k_1-k_2}}$ and $-\frac{1}{k_2}\ln{\frac{k_1+k_2}{k_1-k_2}}$ for the faster and slower solitons respectively \cite{Marchant}. If we use the position shift in KdV to predict the relevant position shifts in 
the granular lattice, for the parameters used in Fig. (\ref{twosoliton}), the shift should be $\frac{1}{k_1}(\ln{\frac{k_1+k_2}{k_1-k_2}})/[2^{\frac{3}{5}}3^{\frac{1}{5}}(p-1)^{\frac{1}{5}}\epsilon^{-\frac{4}{5}}]/\epsilon =7.88$ and  $-\frac{1}{k_2^2}(\ln{\frac{k_1+k_2}{k_1-k_2}})/[2^{\frac{3}{5}}3^{\frac{1}{5}}(p-1)^{\frac{1}{5}}\epsilon^{-\frac{4}{5}}]/\epsilon = -11.15$ for the fast and slow soliton respectively. Numerically, we compare the soliton position with and without the collision,  by tracing the peak of the soliton, and accordingly obtain a 
position shift of $8.2$ for the fast soliton and $-11.2$ 
for the slower soliton, in line with our comments above about
a semi-quantitative agreement between theory and numerics. 

\begin{figure}[htbp]
	\begin{center}
	\includegraphics[width=.45\textwidth]{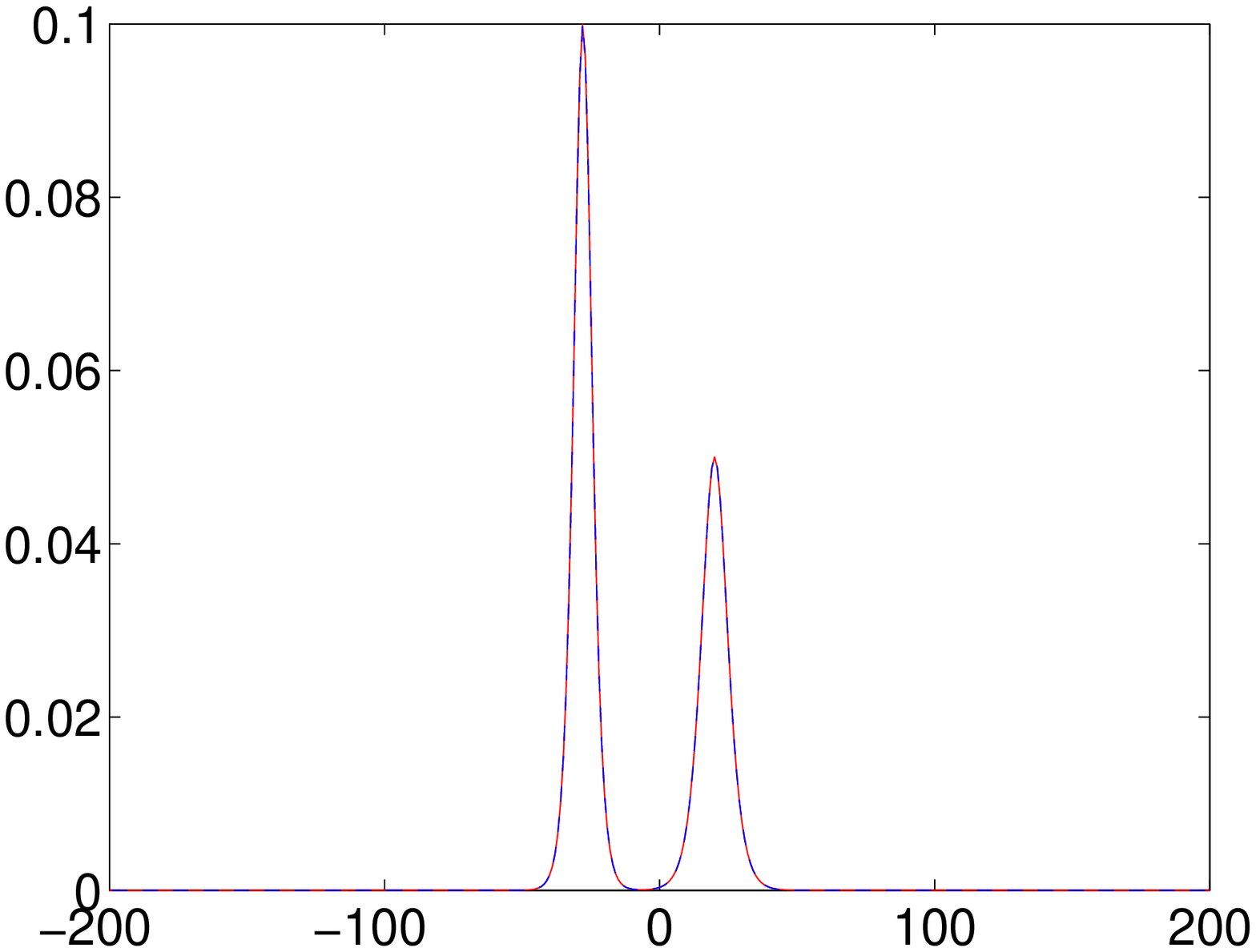}
	\includegraphics[width=.45\textwidth]{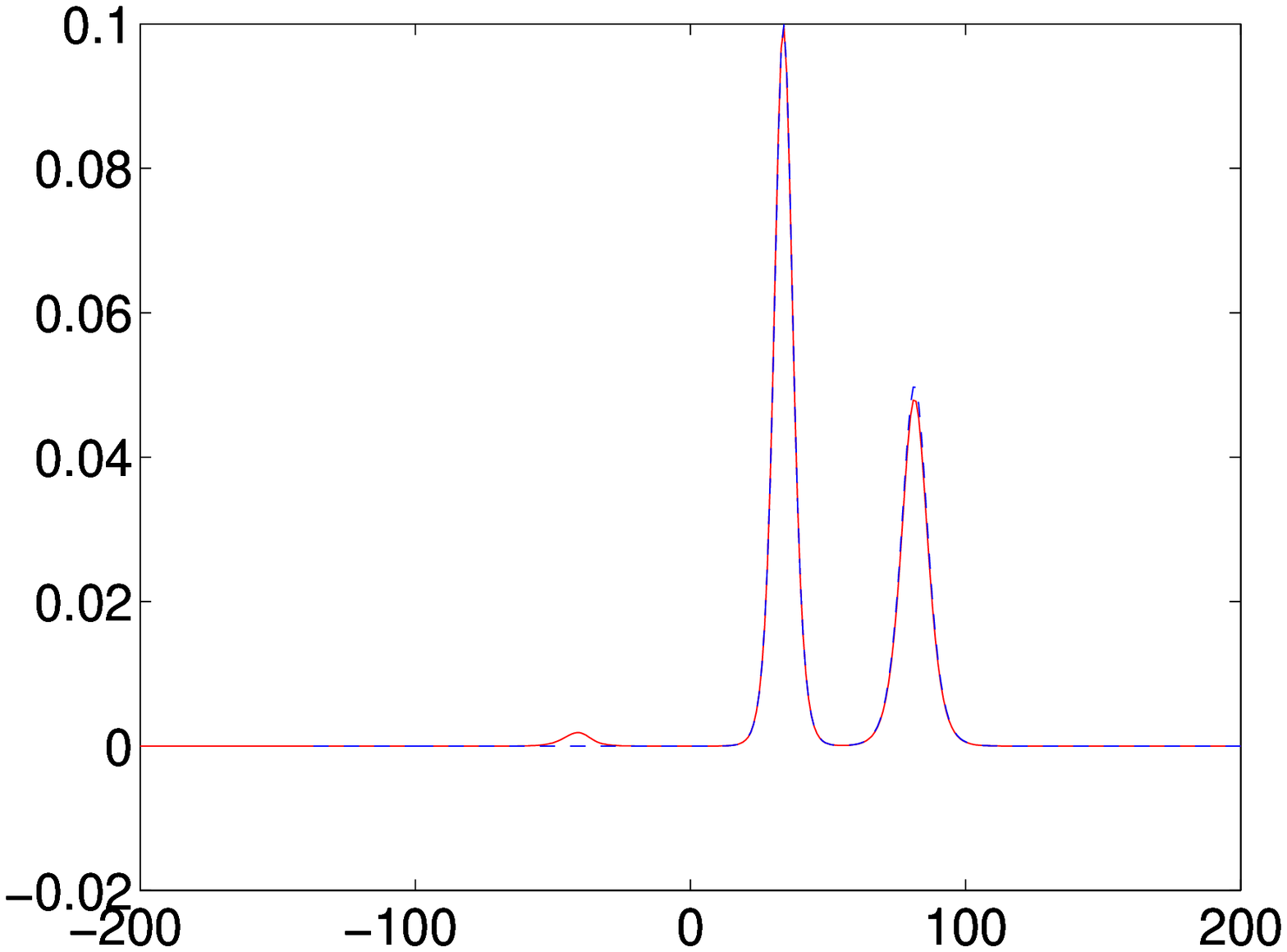}
	\includegraphics[width=.45\textwidth]{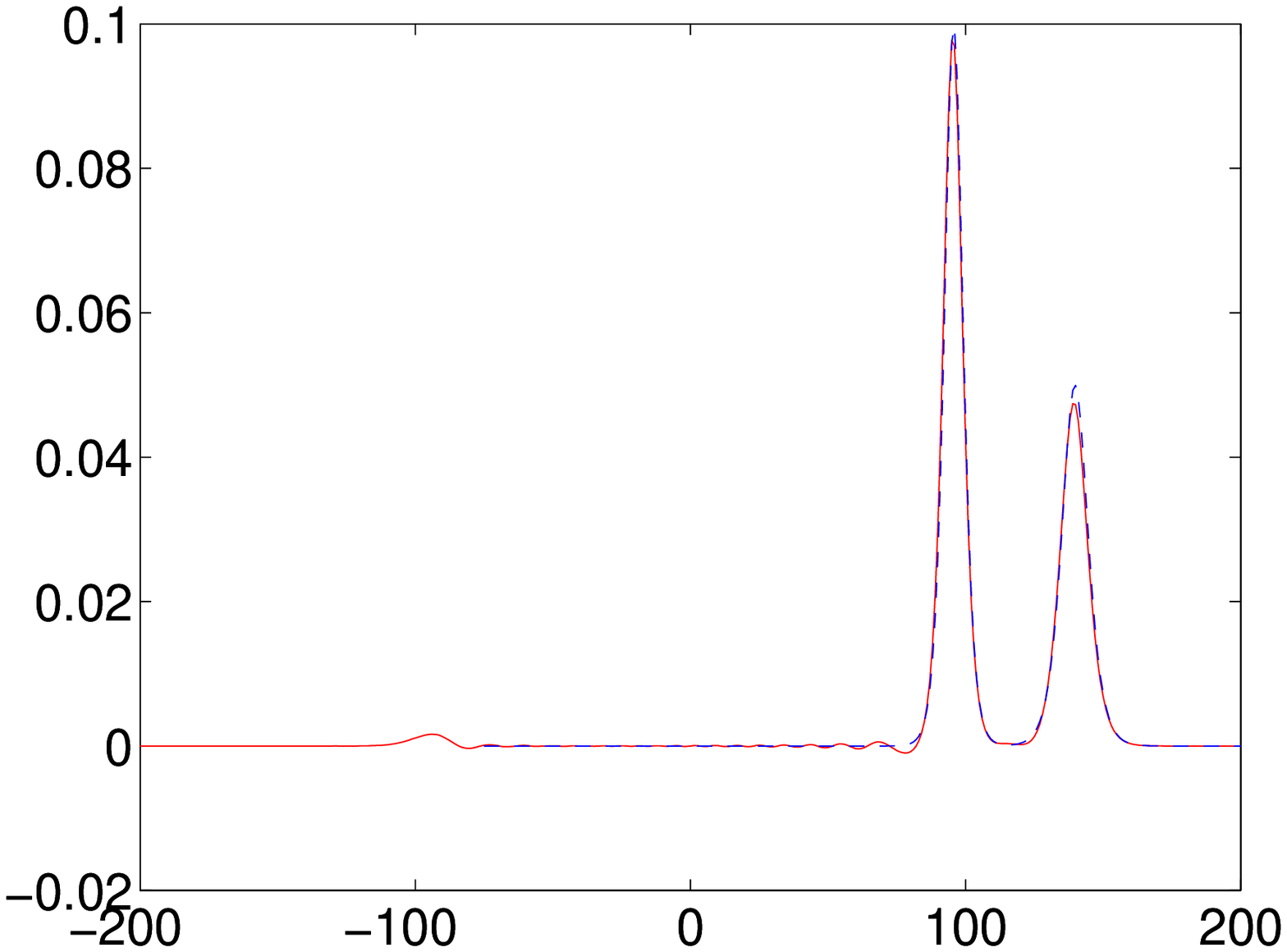}
	\includegraphics[width=.45\textwidth]{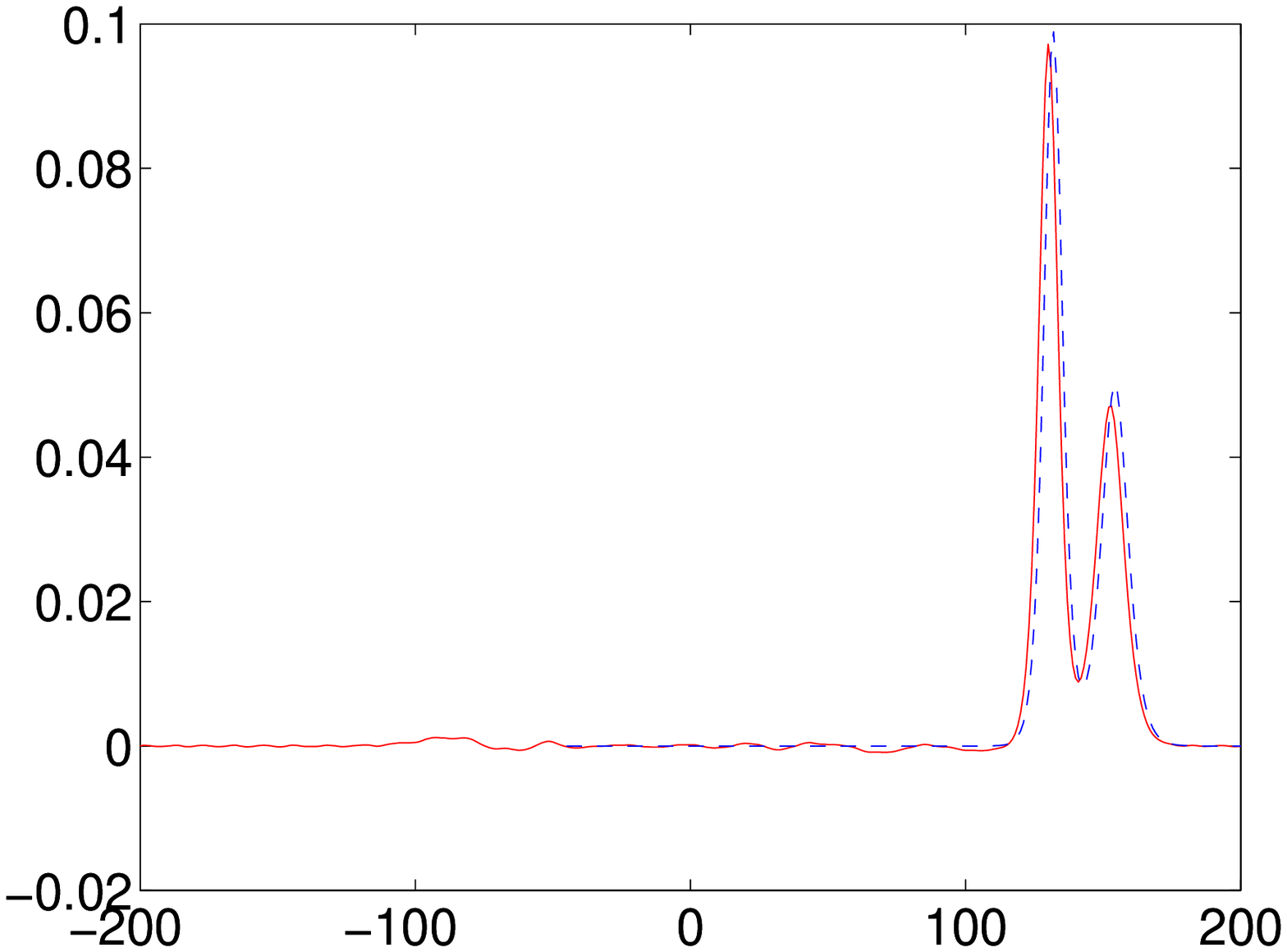}
	\includegraphics[width=.45\textwidth]{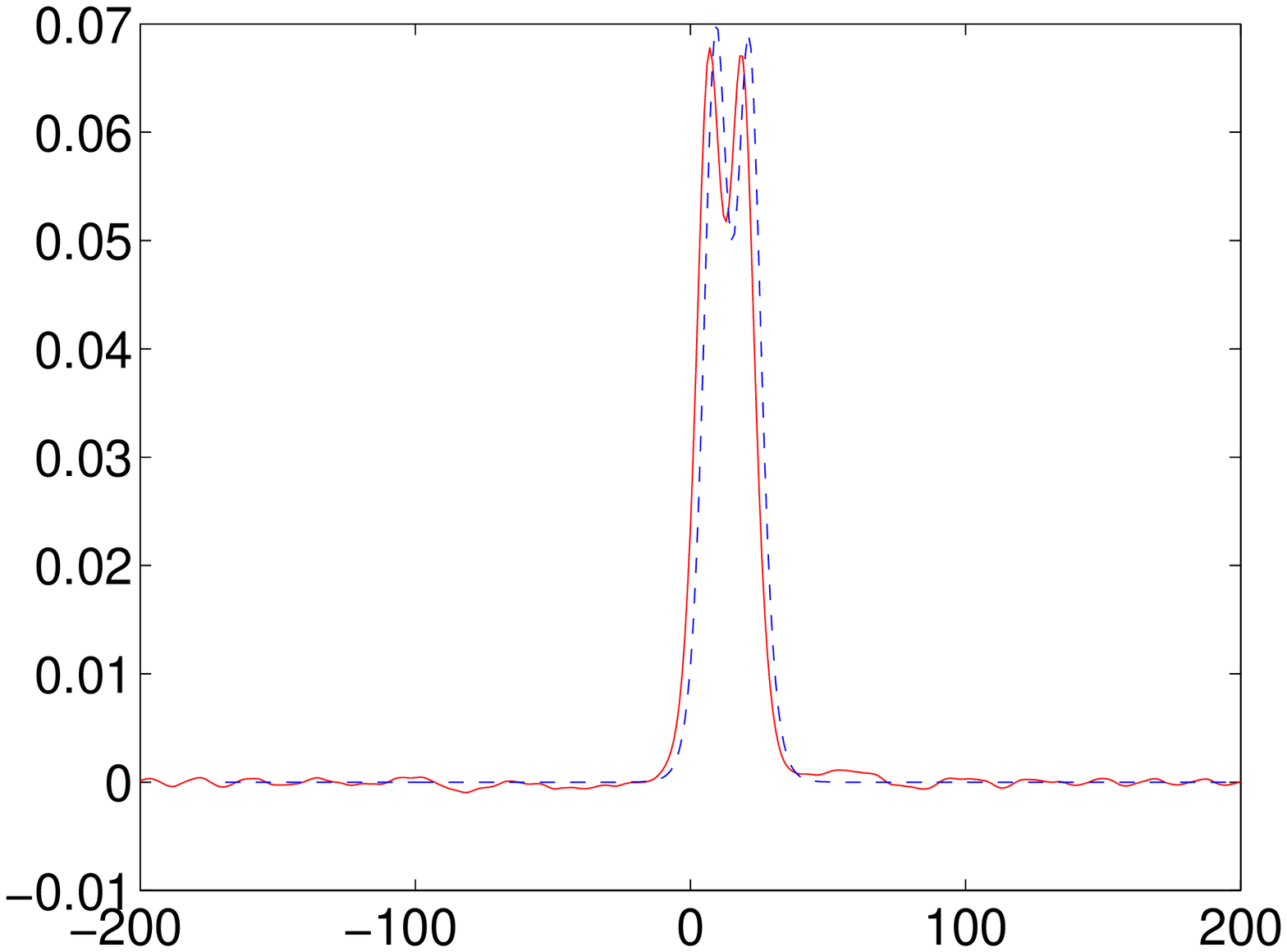}
	\includegraphics[width=.45\textwidth]{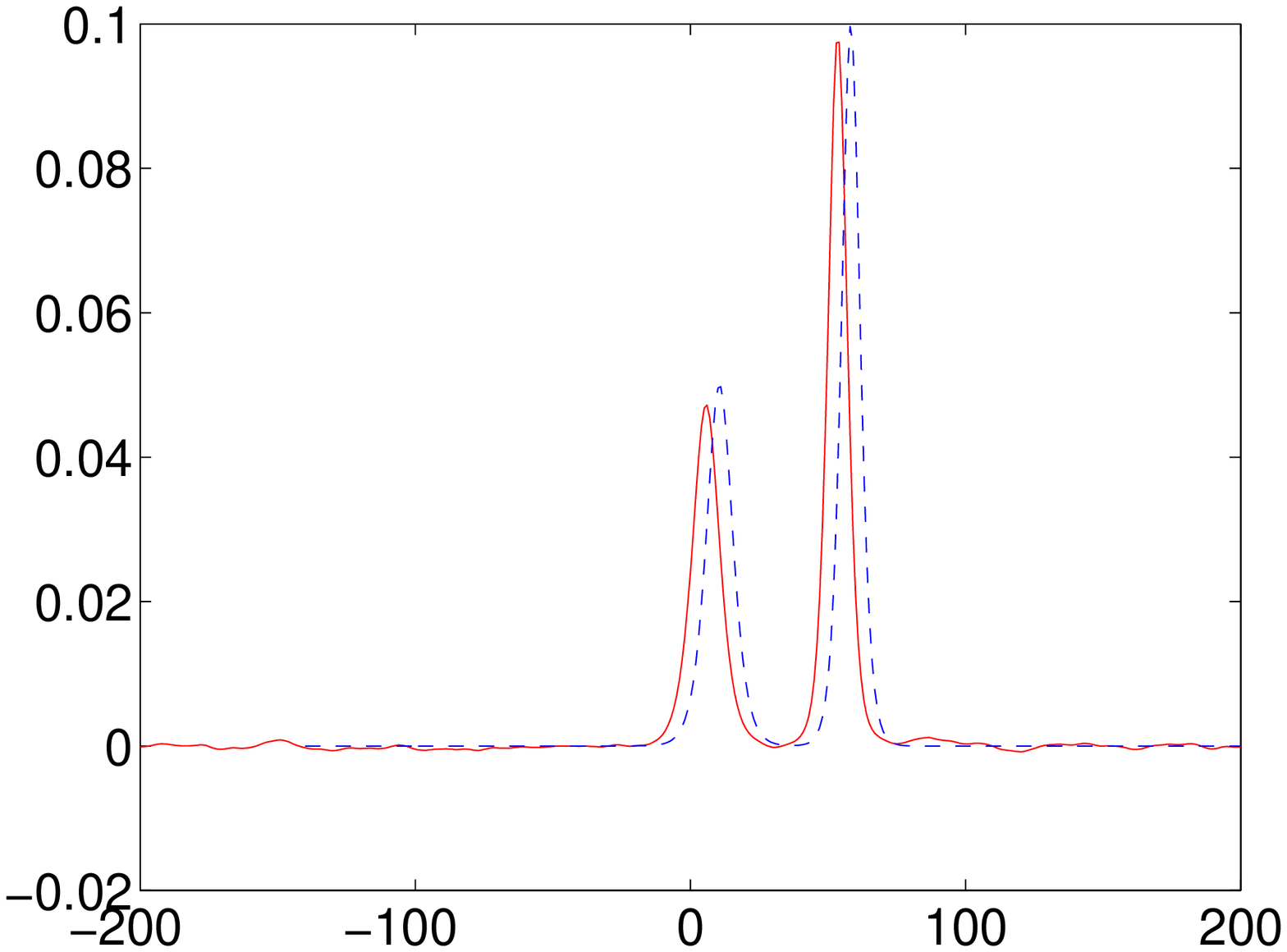}
	\end{center}
	\caption{(Color Online) We consider the comparison of
the collision of two co-propagating solitary waves for the case
of parameters chosen as: 
$p = \frac{3}{2}$, $\delta_0=1$, $\alpha = .1$, $\epsilon = \sqrt{0.1}$.
The initial condition consists of a two-soliton solution 
containing waves of amplitude of $0.1$ and $0.05$ centered at $-20$ and 
$20$, respectively. From top to bottom, left to right snapshots at different
times of the collisional evolution are shown, namely: 
$t=0, 50, 750, 5000, 7500,15000$. The solid (red) line represents
the actual (non-integrable) granular lattice 
numerical evolution dynamics, while the dashed (blue) line 
stems from the qualitatively (and even semi-quantitatively) 
accurate integrable KdV two-soliton approximation.\label{twosoliton} }
\end{figure}

\begin{figure}[htbp]
	\begin{center}
	\includegraphics[width=1\textwidth]{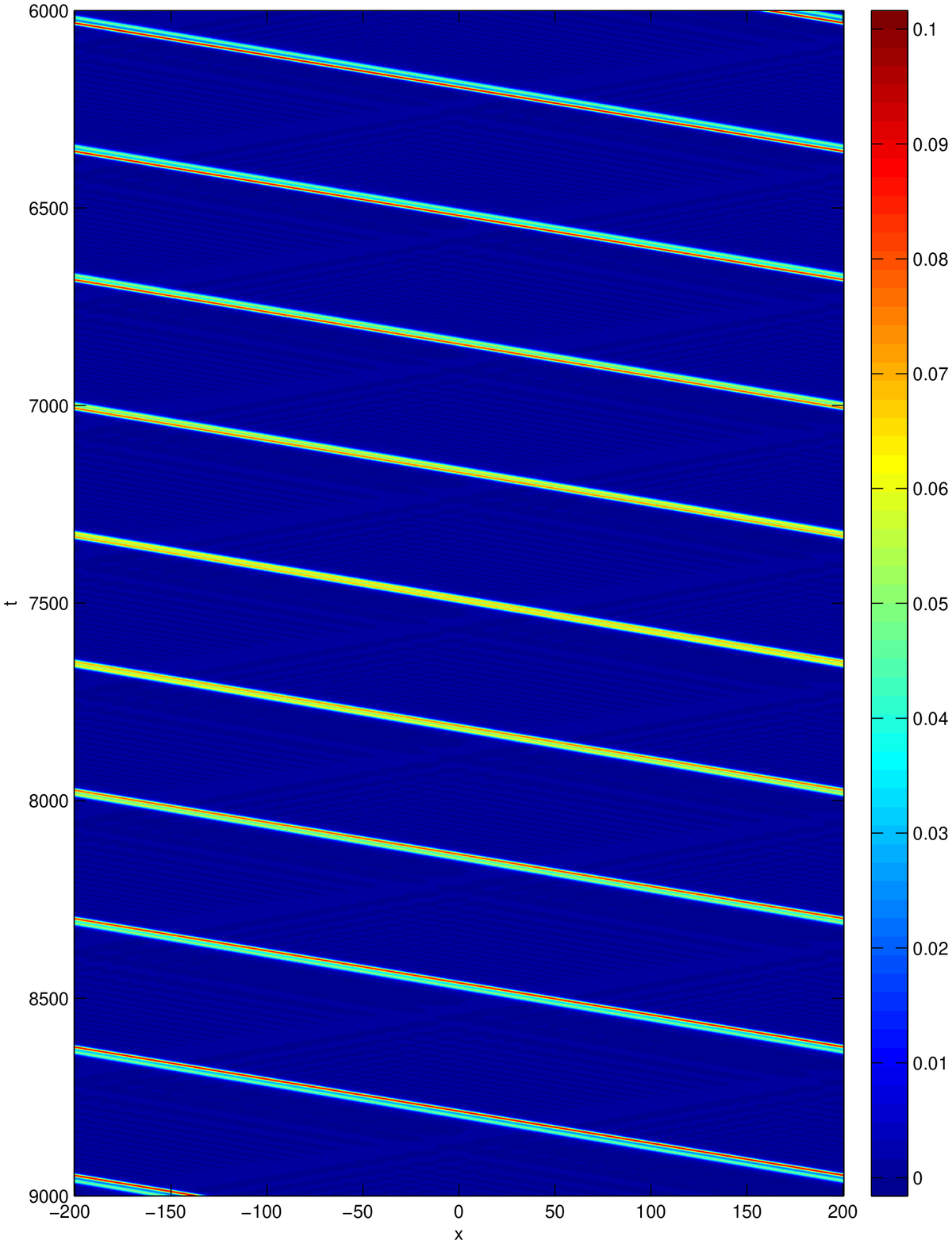}
	\end{center}
	\caption{(Color Online) Here, the parameters and initial data are the 
same as in Fig.~\ref{twosoliton}, but the space-time contour plot of the
strains is shown for the granular lattice evolution.	Notice the long
time scale of the interaction, the exchange of the relative positions of
the solitary waves and the nontrivial presence (and thus impact on the
interaction) of small amplitude linear radiation waves stemming from 
the non-integrability of the model. Periodic boundary conditions have been
employed.}
% The soliton position shifted by $7.4279$ after collision. 
\label{twosoliton_image}
\end{figure}

%\newpage

\section{Connecting the Granular Chain and its Soliton
Collisions to the Toda Lattice}

The Toda Lattice model has the well-known form~\cite{toda1,toda2,toda3}

\begin{eqnarray}
\ddot{x}_n &=& e^{[x_{n-1}-x_n]}- e^{[x_{n}-x_{n+1}]}\nonumber\\
 &=& [1+(x_{n-1}-x_n)+ \frac{1}{2} (x_{n-1}-x_n)^2 +\frac{1}{6} (x_{n-1}-x_n)^3 +\cdots] -[1+(x_{n}-x_{n+1}) + \frac{1}{2} (x_{n}-x_{n+1})^2 +\frac{1}{6} (x_{n}-x_{n+1})^3 +\cdots]\nonumber\\
&=& (x_{n-1}-2x_n+x_{n+1}) + \frac{1}{2}[(x_{n-1}-x_n)^2-(x_{n}-x_{n+1})^2 ]+\cdots
\label{toda}
\end{eqnarray}
In the 2nd and 3rd lines above, we have expanded the lattice into an 
FPU-$\alpha$ type form (i.e., maintaining the leading order nonlinear
term). On the other hand, a similar expansion (notice that now
no long wavelength assumptions are needed) of our  
granular  chain model reads:
 \begin{eqnarray}
\ddot{y}_n &=& [\delta_0+y_{n-1}-y_n]^p - [\delta_0+y_{n}-y_{n+1}]^p\nonumber\\
&=& \delta_0^p[1+p\frac{y_{n-1}-y_n}{\delta_0}+\frac{1}{2}p(p-1)(\frac{y_{n-1}-y_n}{\delta_0})^2+\cdots]-\delta_0^p[1+p\frac{y_{n}-y_{n+1}}{\delta_0}+\frac{1}{2}p(p-1)(\frac{y_{n}-y_{n+1}}{\delta_0})^2+\cdots]\nonumber\\
&=& p\delta_0^{p-1}\{(y_{n-1}-2y_n+y_{n+1} )+\frac{1}{2}\frac{(p-1)}{\delta_0}[( y_{n-1}- y_n)^2-( y_{n}- y_{n+1})^2 ] +\cdots\}
\label{gl}
\end{eqnarray}

Then, rescaling time and displacements according to 
$\tau = t\sqrt{p\delta_0^{p-1}}$ and 
$ \tilde y_n = \frac{p-1}{\delta_0} y_n$, the relevant Eqn. (\ref{gl})
becomes 
\begin{eqnarray}
\tilde{y}_n'' =(\tilde y_{n-1}-2y_n+\tilde y_{n+1} )+\frac{1}{2}[(\tilde y_{n-1}- \tilde y_n)^2-( \tilde y_{n}-\tilde y_{n+1})^2 ] +\cdots
\label{sgl}
\end{eqnarray}
where $'$ is the derivative with respect to $\tau$. Hence, Eqs. (\ref{sgl}) and Eqn. (\ref{toda}) agree up to second order, and thus the leading
order error in our granular chain approximation by the Toda
lattice will stem from the cubic term (for which it is straightforward
to show that it cannot be matched between the two models i.e., we have
expended all the scaling freedom available within the discrete granular
lattice model).  

To see the closeness of the two models, we define the error term $Y$ by the relation $\tilde{y}_n = x_n + \epsilon Y_n$.  Here $Y$ will remain of order one or smaller and $\epsilon$ controls the size of the error term.  We proceed by using the evolution for $Y$ to control how small we can choose $\epsilon$ while keeping $Y$ of order one over timescales of interest.  We compute  
\begin{eqnarray}
\ddot{Y}_n & = & \eps^{-1}\left(\ddot{\tilde{y}}_n - \ddot{x}_n\right) \nonumber \\
& = & Y_{n+1} + Y_{n-1} - 2Y_n + \epsilon^{-1}\mathrm{Res} + L(x)Y + \eps^{-1}N(\eps Y).
\end{eqnarray}
Here $L(x)$ is a linear operator with a norm that scales roughly like $\|x\|$, $N$ is quadratic and the residual given by the disparity between the interaction potential for Toda and that for the granular chain is:
\[ \ba{lcl} \mathrm{Res} & = & e^{x_{n-1}-x_n}-e^{x_n-x_{n+1}} - \frac{p-1}{p}\left[ 1 + \frac{x_{n-1}-x_n}{p-1}\right]^p + \frac{p-1}{p}\left[ 1 + \frac{x_n - x_{n+1}}{p-1}\right]^p \\ \\
& = & \frac{1}{6}(1-\frac{p-2}{p-1})((x_{n-1}-x_n)^3 - (x_n-x_{n+1})^3) + \mathcal{O}\left((x_{n-1}-x_n)^4 + (x_n-x_{n+1})^4\right). \ea \]

Since the discrete wave equation conserves the $l_2$ norm exactly, the $l_2$ norm of $Y$, for time scales on which $\|L(x)\|T << 1$, will be bounded above by a constant times $T\epsilon^{-1}$ times the $l_2$ norm of $((x_{n-1}-x_n)^3 + (x_n-x_{n+1})^3)$.  In other words $Y$ remains of 
order one on timescale $T$ so long as 
\[ \epsilon << T(\sum_n((x_{n-1}-x_n)^3 - (x_n-x_{n+1})^3)^2)^{(1/2)} \]
%This heuristic argument is presented rigorously in the appendix.
In the sequel we will consider solutions for which $(x_n - x_{n+1}) \sim k^2 e^{-kn}$ over timescales $k^{-1}$.  Thus $(x_{n-1}-x_n)^3 - (x_n - x_{n+1})^3 \sim k^7e^{-kn}$ and we obtain an upper bound on the approximation error of $\epsilon << k^{-1}(\sum_n \frac{k^{14}}{1-e^{-k}})^{1/2} \sim k^{5.5}$.  We note that this improves on the estimate of $k^{3.5}$ which appears e.g. in \cite{Schneider1999, hoffman}. [A number of details towards making this argument rigorous are presented
in the Appendix]. After describing the single and multiple solitary 
wave solutions of the Toda lattice, we will return to the numerical 
examination of the validity of this concrete prediction.

%\subsection{Adding up two one-soliton solution of Toda Lattice}

In starting our comparison of the evolution of Toda lattice 
analytical solutions with the granular crystal 
dynamical evolution, 
we consider the single soliton solution of the Toda lattice of form 
\begin{eqnarray}
{x}_n =-\ln\left\{\frac{1+\exp[-2kn \pm 2(\sinh k)t ]}{1+\exp[-2k(n-1) \pm 2(\sinh k)t]}\right\}.
\label{TodaOne}
\end{eqnarray}
By composing two counter propagating solitons we get 
a typical dynamical evolution such as the one
presented in Figs.~\ref{Toda1} and~\ref{Toda2}. Once again
(as in the KdV case), the former
represents the snapshots at specific times, while the latter
the contour plot of the strain variable evolution (as will be the
case in all the numerical experiments presented herein). The figure
contains the comparison of 3 waveforms. The solid (red) one is
from the time integration of the granular  chain dynamics. The
dashed (blue) line is a plain superposition of two one-solitons of the
Toda lattice, while the dash-dotted (green) line shows the 
evolution of the Toda lattice. Detailed examination of the
latter two suggests that 
the dashed and the dash-dotted curves do not perfectly coincide
(although such differences are not straightforwardly discernible
in the scale of Fig.~\ref{Toda1}).
This is the well-known feature of the presence of {\it phase shifts}
as a result of the solitonic collisions in the integrable dynamics.
It is however relevant
to add here that admittedly not only qualitatively but even
quantitatively the Toda lattice appears to be capturing the
counter-propagating soliton dynamics of our granular  chain, both
before, during and after the collision.

\begin{figure}[htbp]
	\begin{center}
	\includegraphics[width=.45\textwidth]{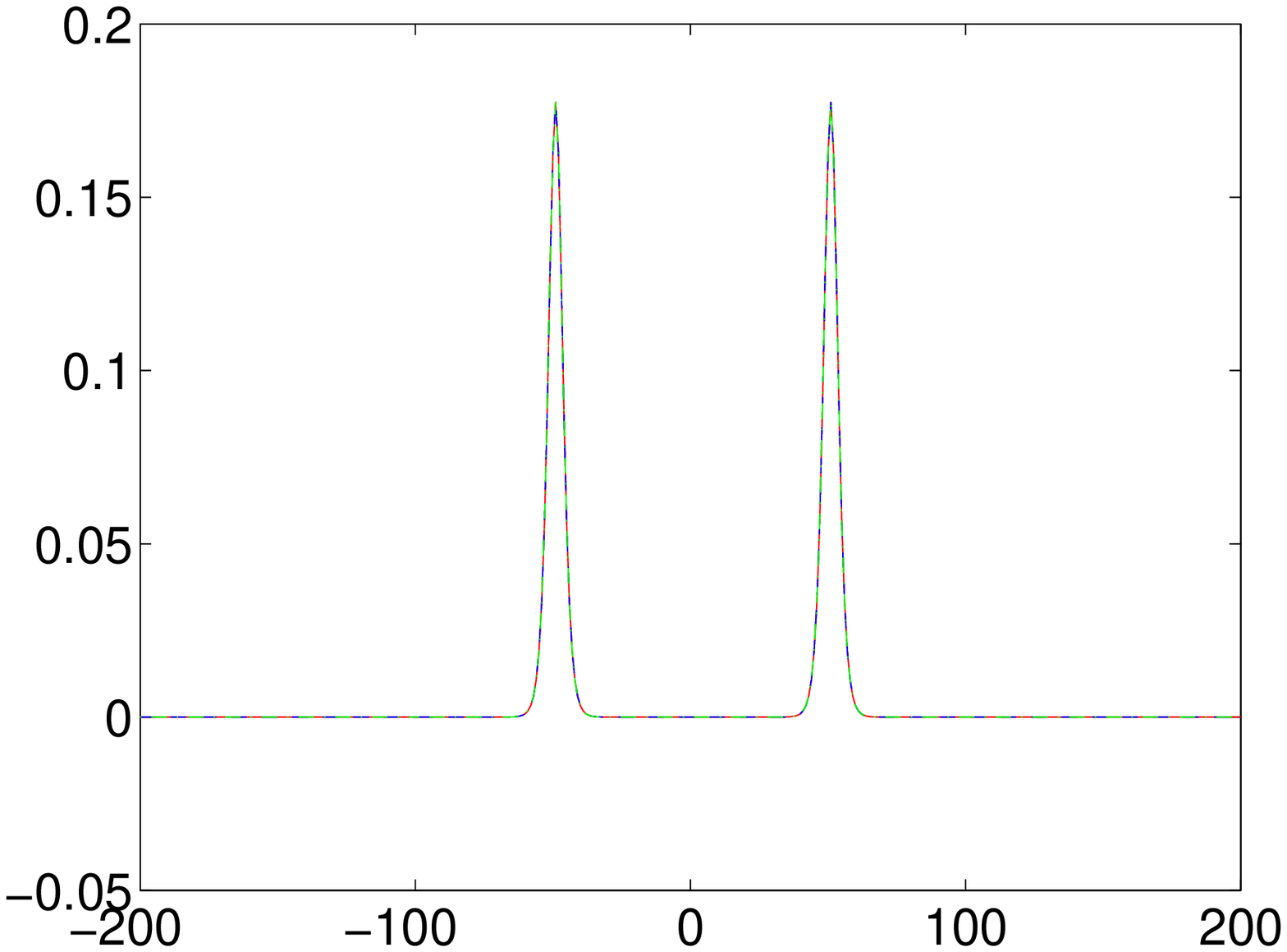}
	\includegraphics[width=.45\textwidth]{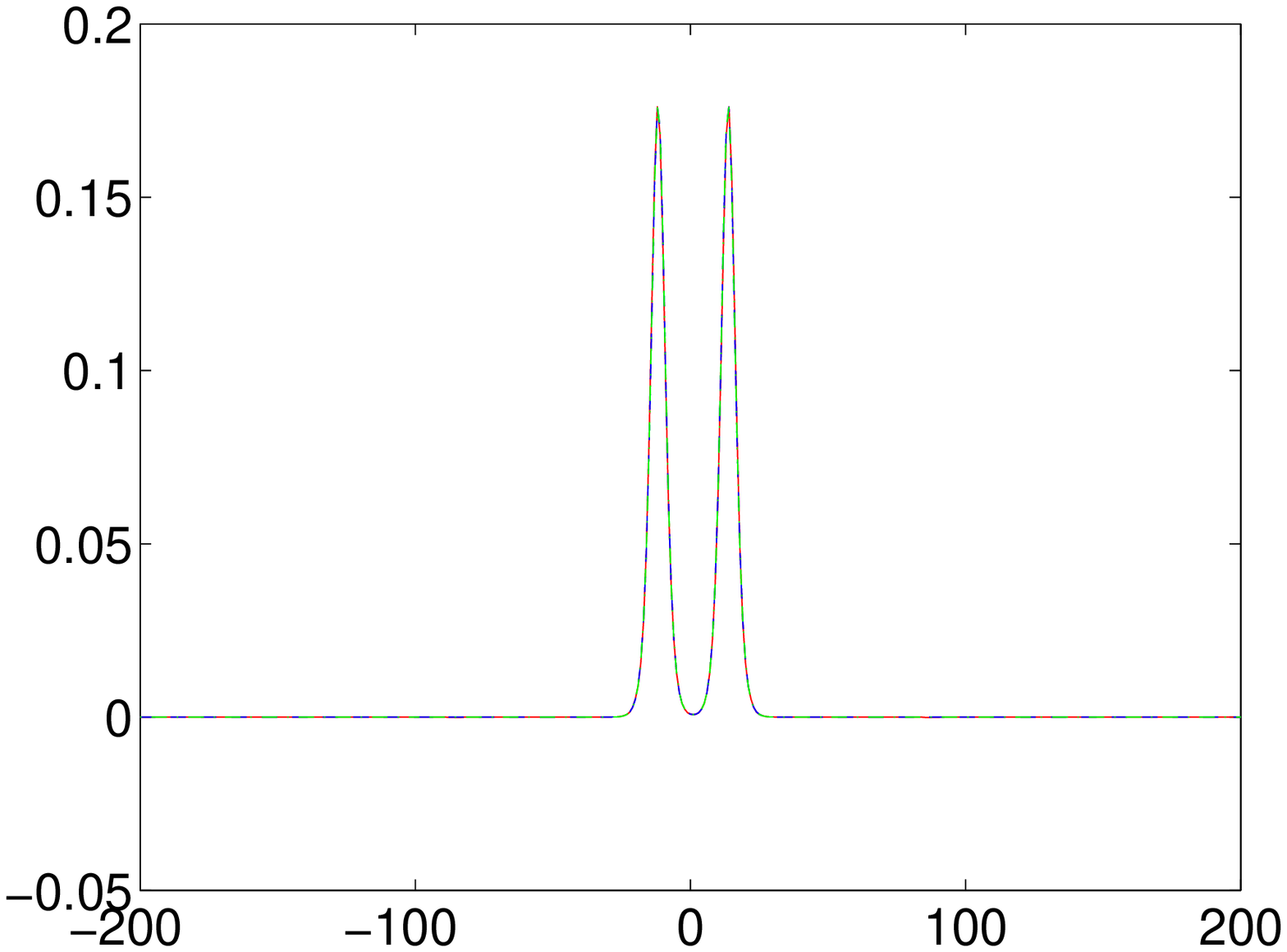}
	\includegraphics[width=.45\textwidth]{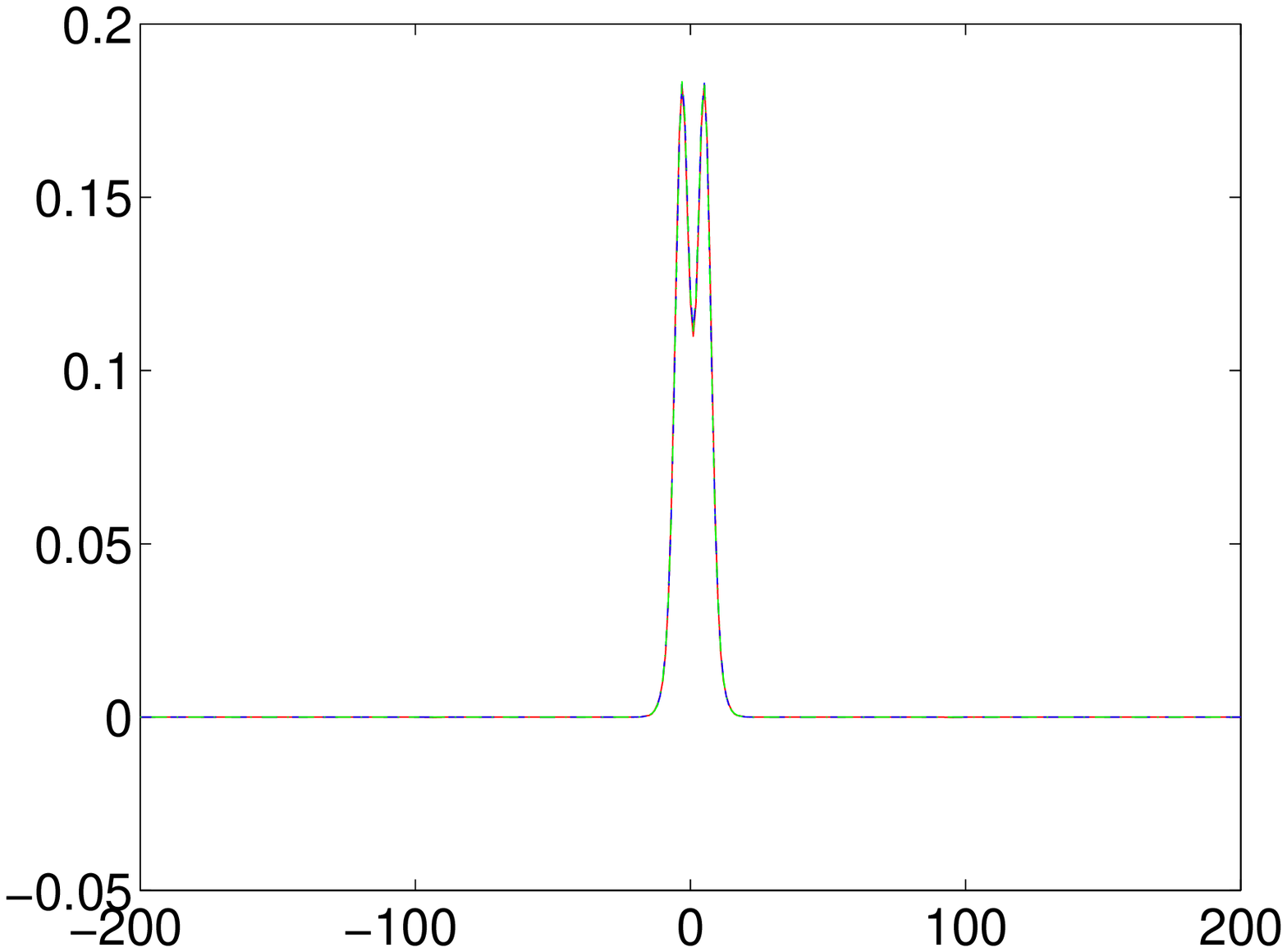}
	\includegraphics[width=.45\textwidth]{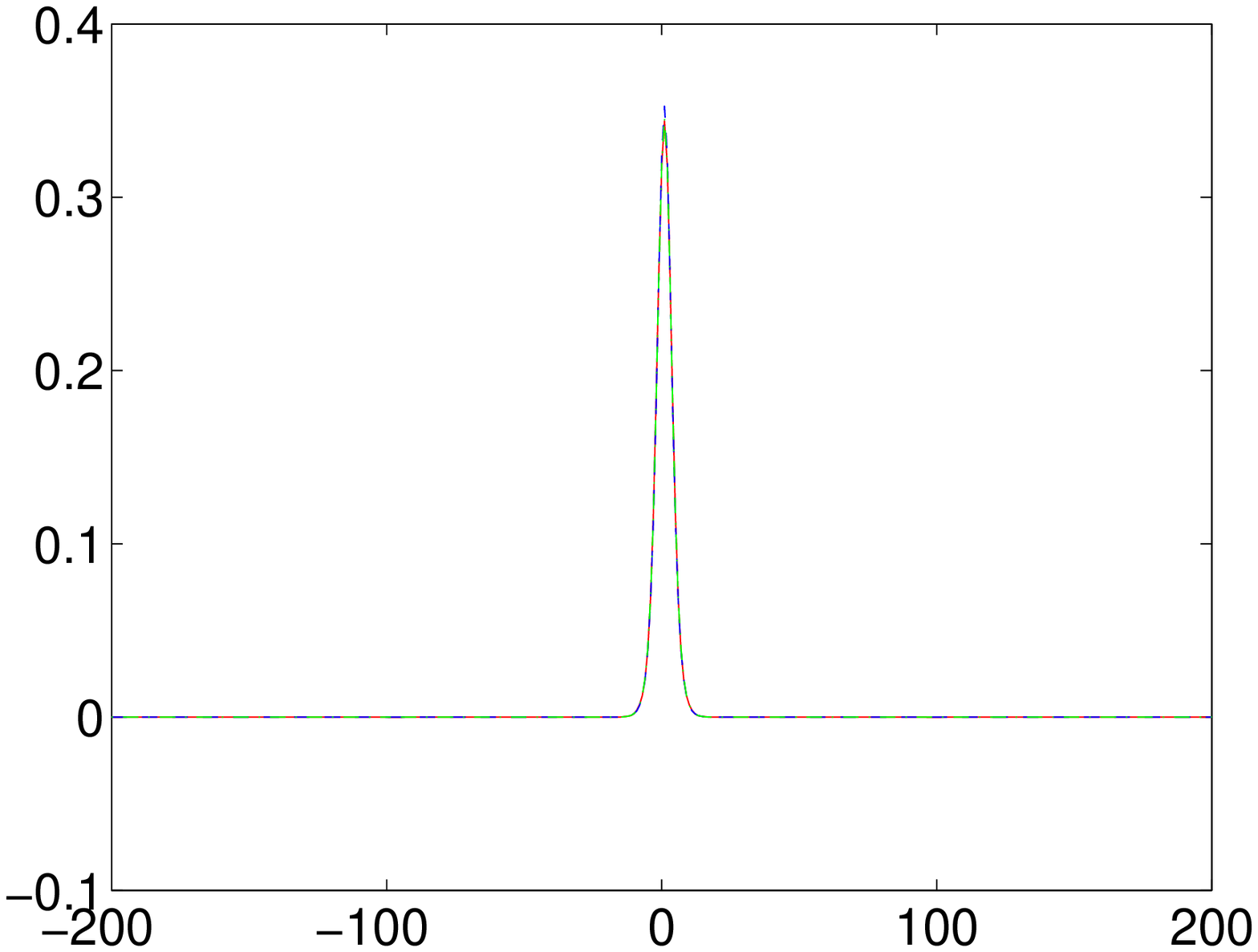}
	\includegraphics[width=.45\textwidth]{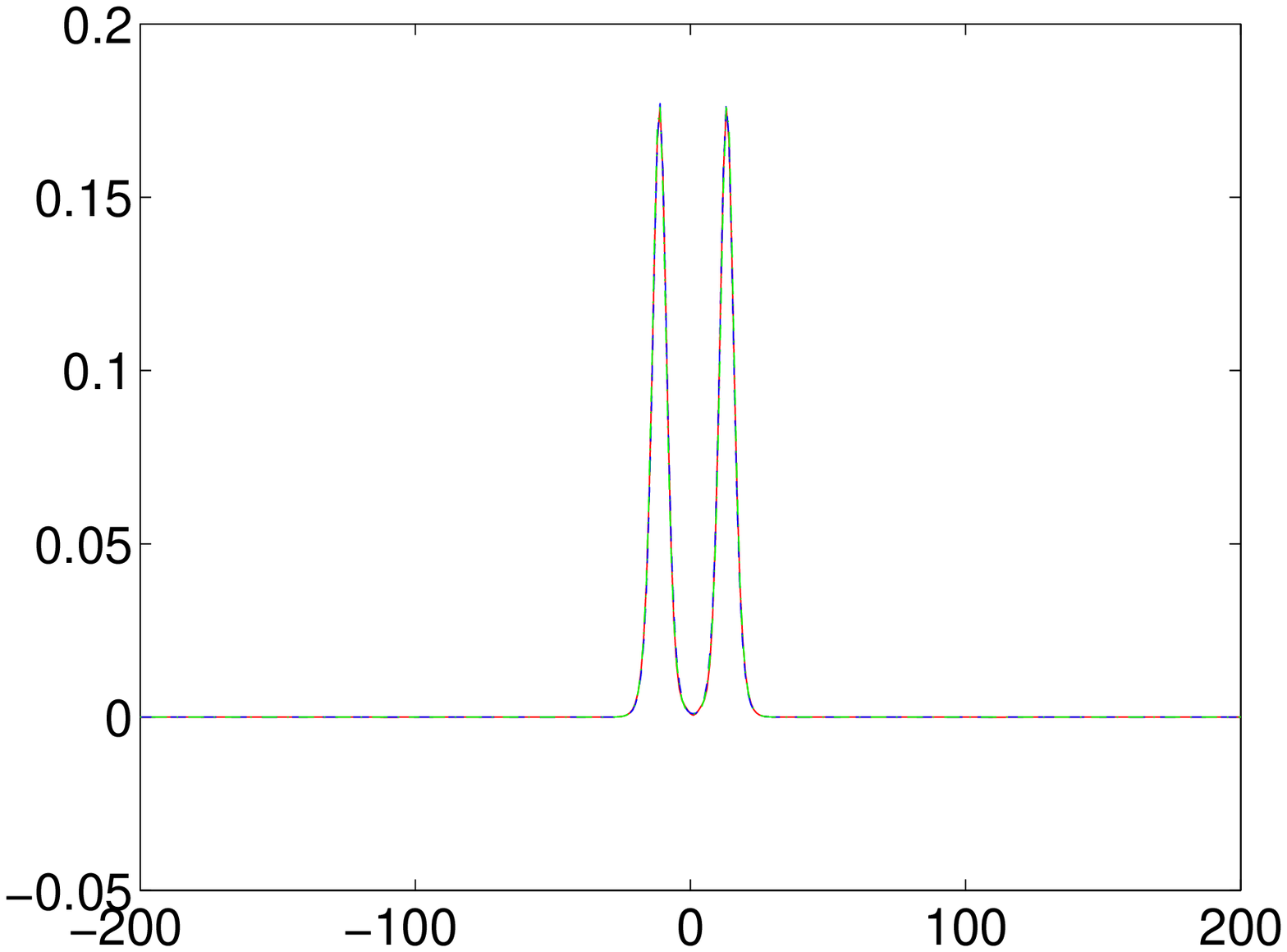}
	\includegraphics[width=.45\textwidth]{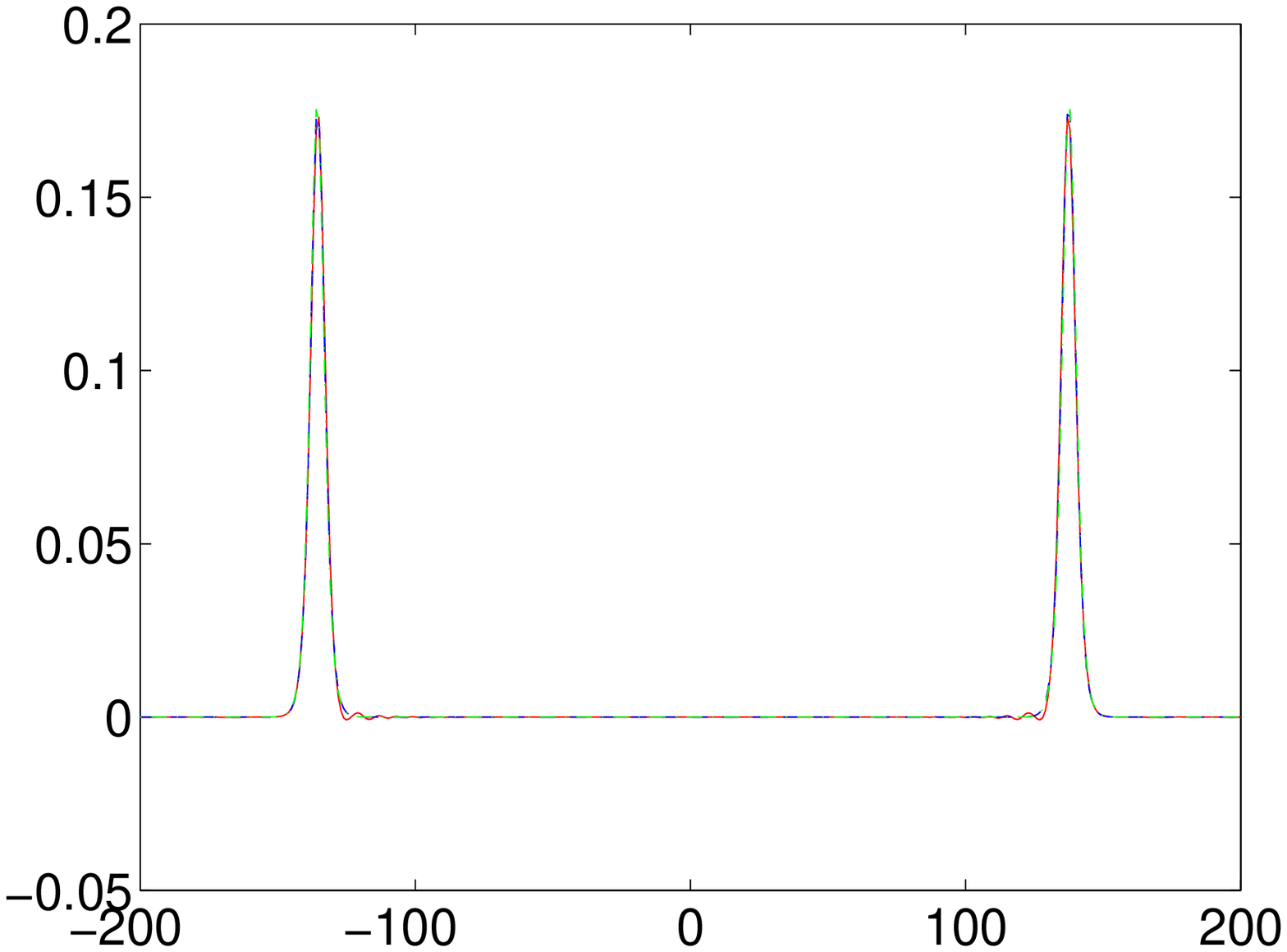}
	\end{center}
	\caption{(Color Online) In this example, the parameter values
$p = \frac{3}{2}$, $\delta_0=1$, $k=0.3$ are used. 
The initial condition consists of two solitons of the same amplitude at 
$-50$ and $50$. Here, the granular crystal
(non-integrable) dynamics is also compared to 
the mere addition of two one-soliton solutions of 
the Toda lattice. From top to bottom, left to right the snapshots shown 
are at $t=0, 30, 37, 40, 50, 150$. The (red) solid line is for 
the actual (numerical) granular lattice dynamics,  the (blue) dashed line is 
the 
plain superposition of two Toda one-soliton solutions of Eqn.~(\ref{TodaOne}), 
and the green dash-dotted line represents the 
numerical evolution of the Toda chain. \label{Toda1}}
\end{figure}

\begin{figure}[htbp]
	\begin{center}
	\includegraphics[width=.45\textwidth]{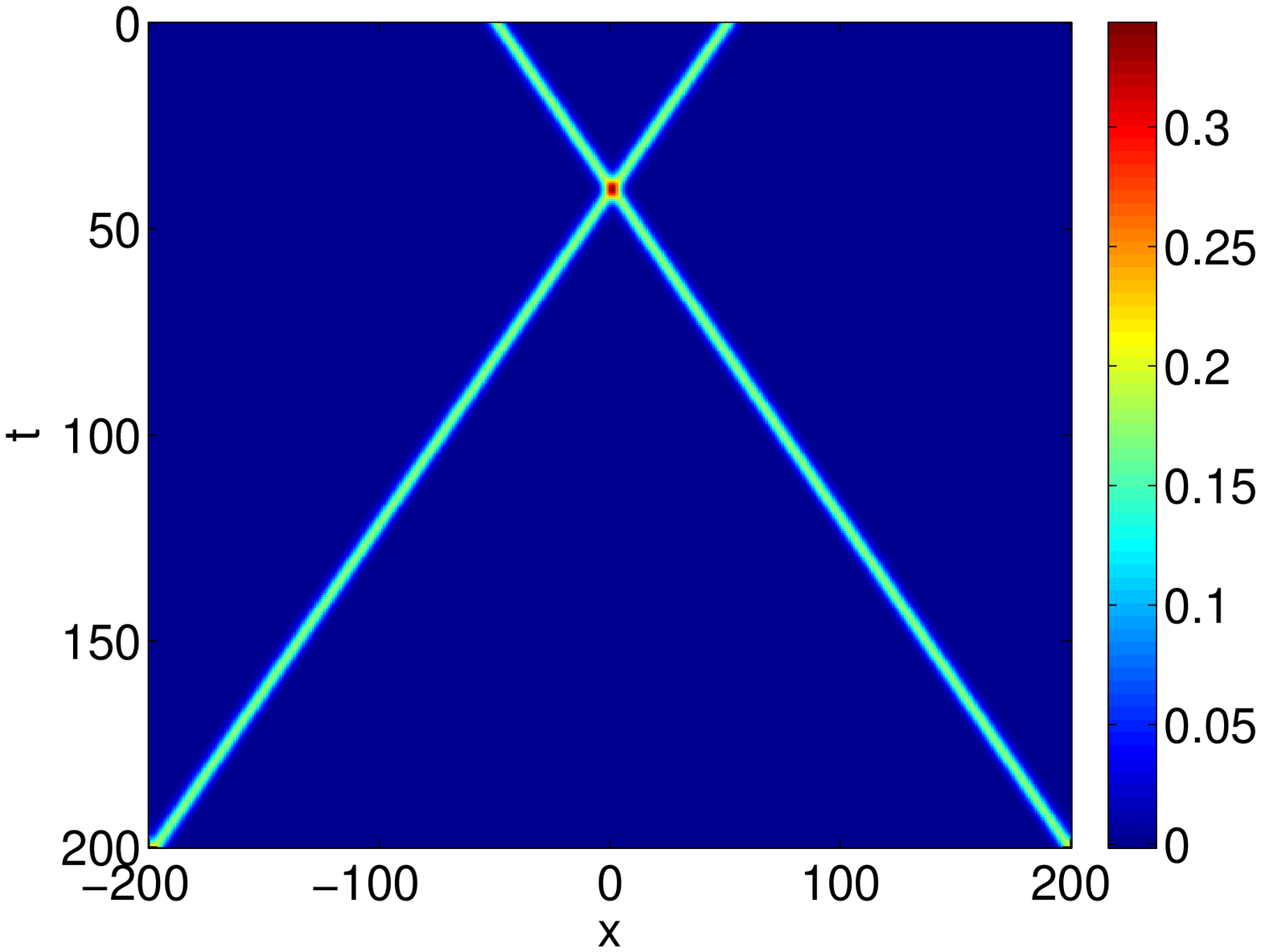}
	\includegraphics[width=.45\textwidth]{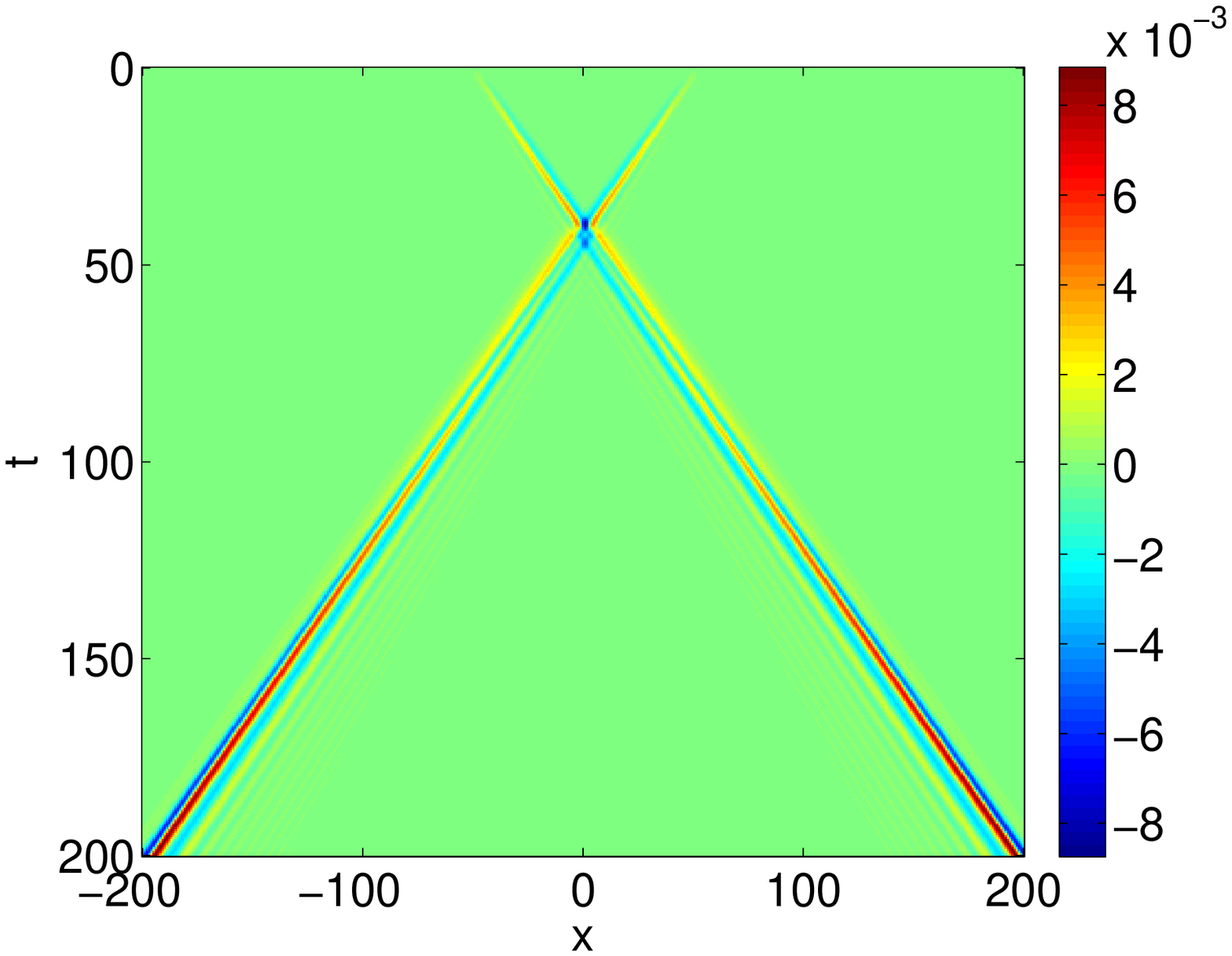}
	\end{center}
	\caption{(Color Online) In this contour plot space-time evolution of the
strains, the parameters and initial data are the same as those of 
Fig. \ref{Toda1}. The left panel represents the dynamical evolution of
two colliding solitary waves of 
the granular lattice. The right panel shows the {\it difference} 
between  the granular lattice and the superposition 
of two one-soliton solutions of the Toda lattice. The very small
magnitude of the difference (to be quantified further below) in the
colorbar in comparison to the left panel illustrates the relevance
of our approximation.}
%Bottom panel shows the difference between the evolution of granular lattice and the Toda lattice.
\label{Toda2}
\end{figure}

The two-soliton solution of the Toda lattice is of the form \cite{toda2}
\begin{eqnarray}
{x}_n =S_{n-1}-S_n
\label{TodaTwoSoliton}
\end{eqnarray}
with 
\begin{eqnarray}
{S}_n =\ln \left\{ 1+A_1\exp[2(k_1 n - \beta_1 t) ]+A_2\exp[2(k_2 n -\beta_2t) ]+\exp[2(k_1+k_2) n -2(\beta_1+\beta_2)t ] \right\}
\label{toda_aux}
\end{eqnarray}
and
\begin{eqnarray}
\beta_i^2 = \sinh^2 k_i\\
A_1A_2 = \frac{(\beta_1+\beta_2)^2-\sinh^2(k_1+k_2)}{\sinh^2(k_1-k_2)-(\beta_1-\beta_2)^2}.
\end{eqnarray}
%we get Fig. (\ref{TodaTwoSnap}-\ref{TodaTwo}).
The results of this evolution are very similar to the ones
illustrated above 
%(by the  numerical evolution of the Toda lattice; cf. e.g. the green dash-dotted line of in Fig.~~\ref{Toda1}) 
and hence are not shown here.

In order to appreciate the role of the wave amplitude (and thus of the speed
in this mono-parametric family of soliton solutions) in the outcome
of the interaction, we have also explored 
 higher amplitude collisions, as shown in 
Figs.~\ref{TodaTwoSnap2}-\ref{TodaTwo2}. 
In these cases, the small amplitude wakes of radiation traveling
(at the speed of sound)
behind the supersonic wave  are more
clearly discernible. Nevertheless, once again the Toda lattice
approximation appears to capture accurately the result of such a
collision occurring at strain amplitudes of about half the precompression.
Fig.~\ref{TodaTwo2} again captures not only the granular chain evolution but also the relative error between that and the corresponding Toda lattice
evolution. Here, it is more evident that the eventual mismatch of speeds
of the waves between 
the approximation and the actual evolution yields a progressively larger
difference between the two fields.

As a systematic diagnostic of the  ``distance'' of the numerical 
granular crystal and approximate Toda-lattice-based solutions
(and as a check of our theoretical
prediction presented above), we have 
measured the $l_\infty$ norm (maximum absolute value in space and time) and 
the maximum of the $l_2$ norm in space of $(\tilde y_{n-1} - \tilde y_n)- (x_{n-1}-x_{n})$ till the two counter-propagating solitons are well separated.  
%between the evolution of granular lattice and Toda lattice 
%within a specified time horizon of $T = 200$ time units.
%are measured as 
We measured this quantity as a function of the parameter $k$ 
(with $k=k_1=k_2$, $A_1 = A_2$) and report it as a function of 
the amplitude of $(x_{n-1}-x_{n})$ in Fig.~(\ref{Error_GLmToda}).
As shown in Fig.~(\ref{Error_GLmToda}), both graphs indicate a power law growth of the relevant error, with an exponent of $3.0010$ and $2.7557$ for the 
$l_\infty$ and $l_2$ norm of the error respectively.
These results can be connected with the  theoretical expectations
for this power law. In particular, as we saw above the theoretical
prediction for $\epsilon$ scales as $k^{5.5}$, while the amplitude, $A$,
of the solution is proportional to $k^2$, hence the scaling of
the quantity measured in our numerics is theoretically predicted
as $A^{2.75}$.  The close agreement with our numerics suggests that the theoretical estimate is tight  i.e.,  there is no normal form transformation which 
could push the residual between FPU and Toda to higher order.

\begin{figure}[htbp]
	\begin{center}
	\includegraphics[width=.4\textwidth]{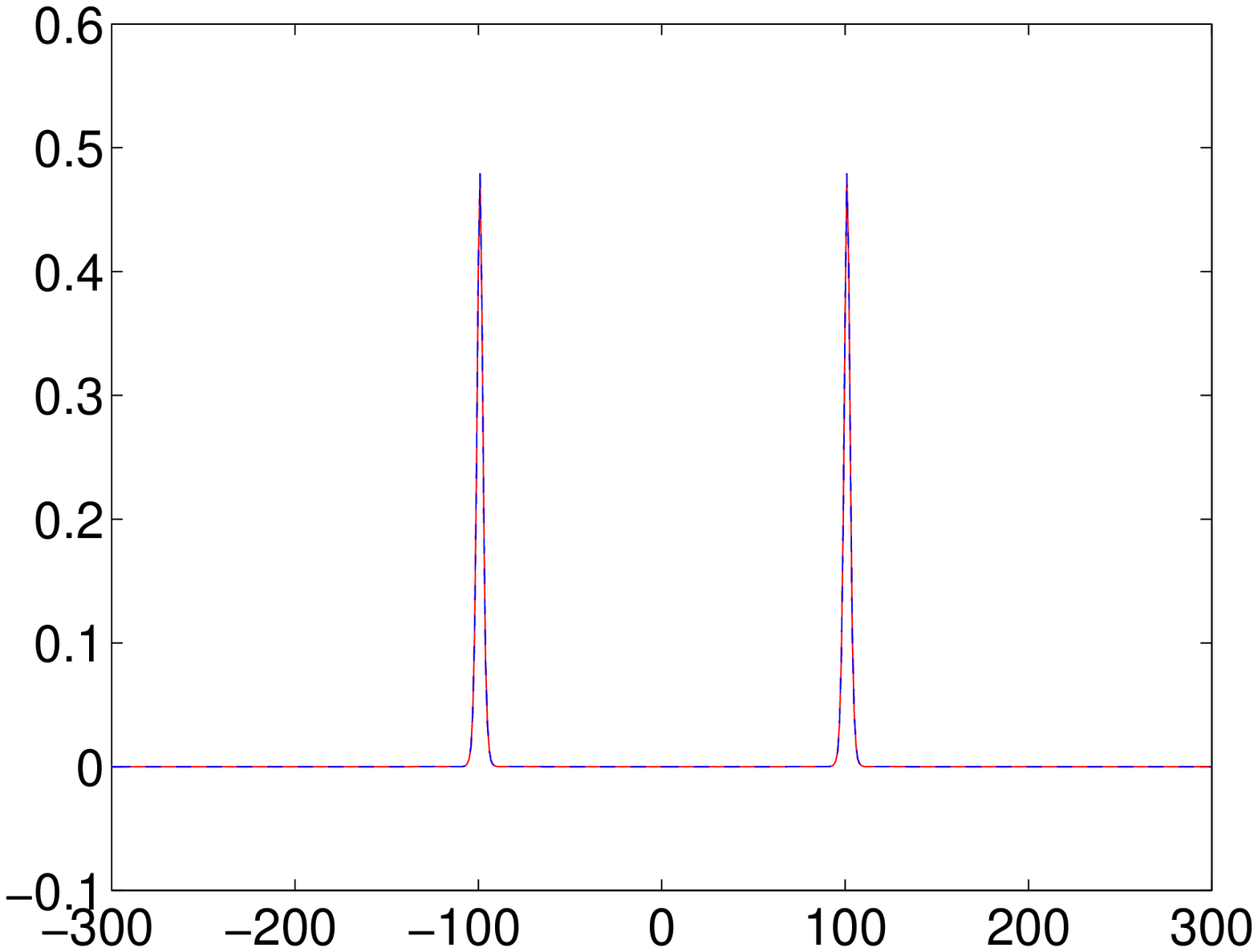}
	\includegraphics[width=.4\textwidth]{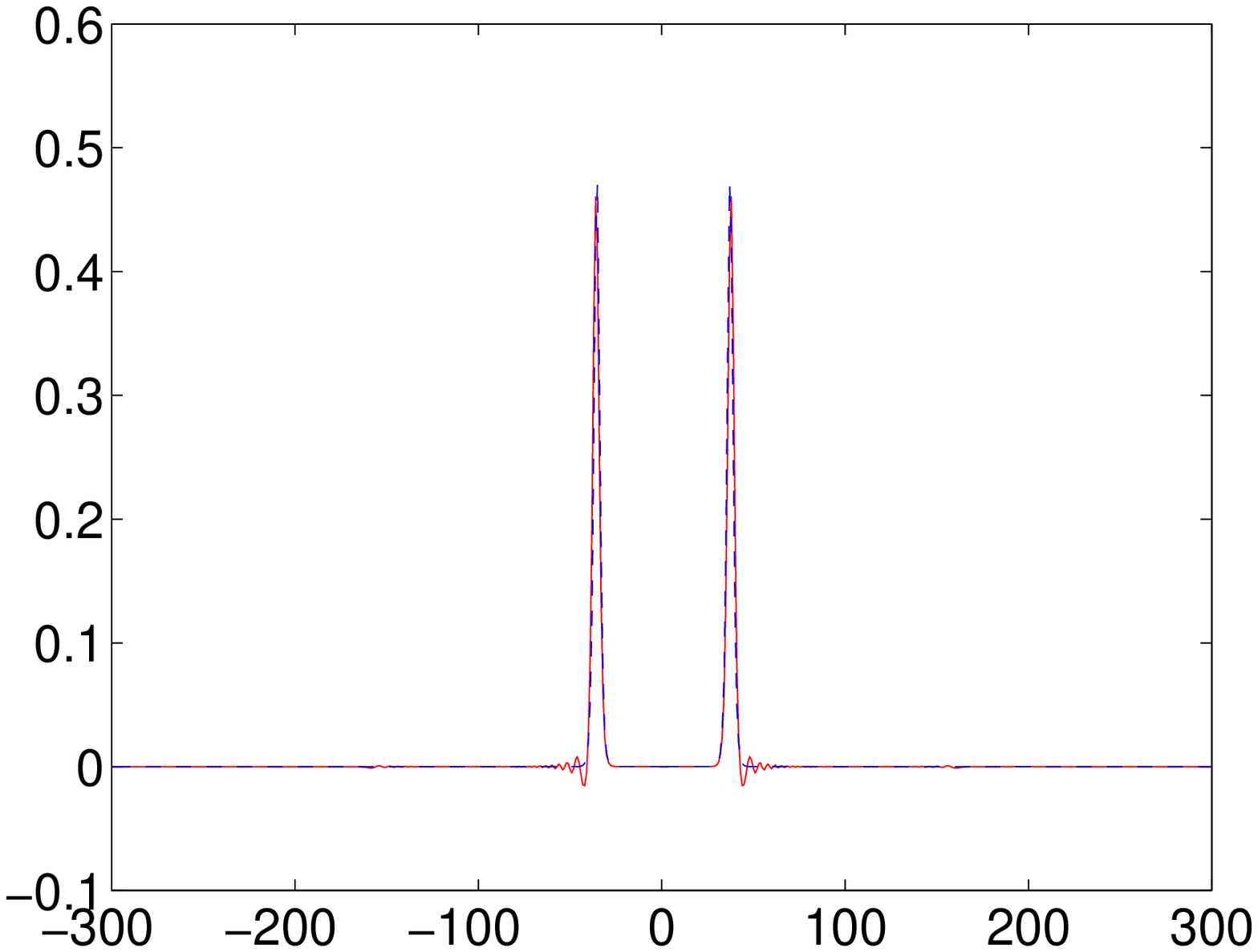}
	\includegraphics[width=.4\textwidth]{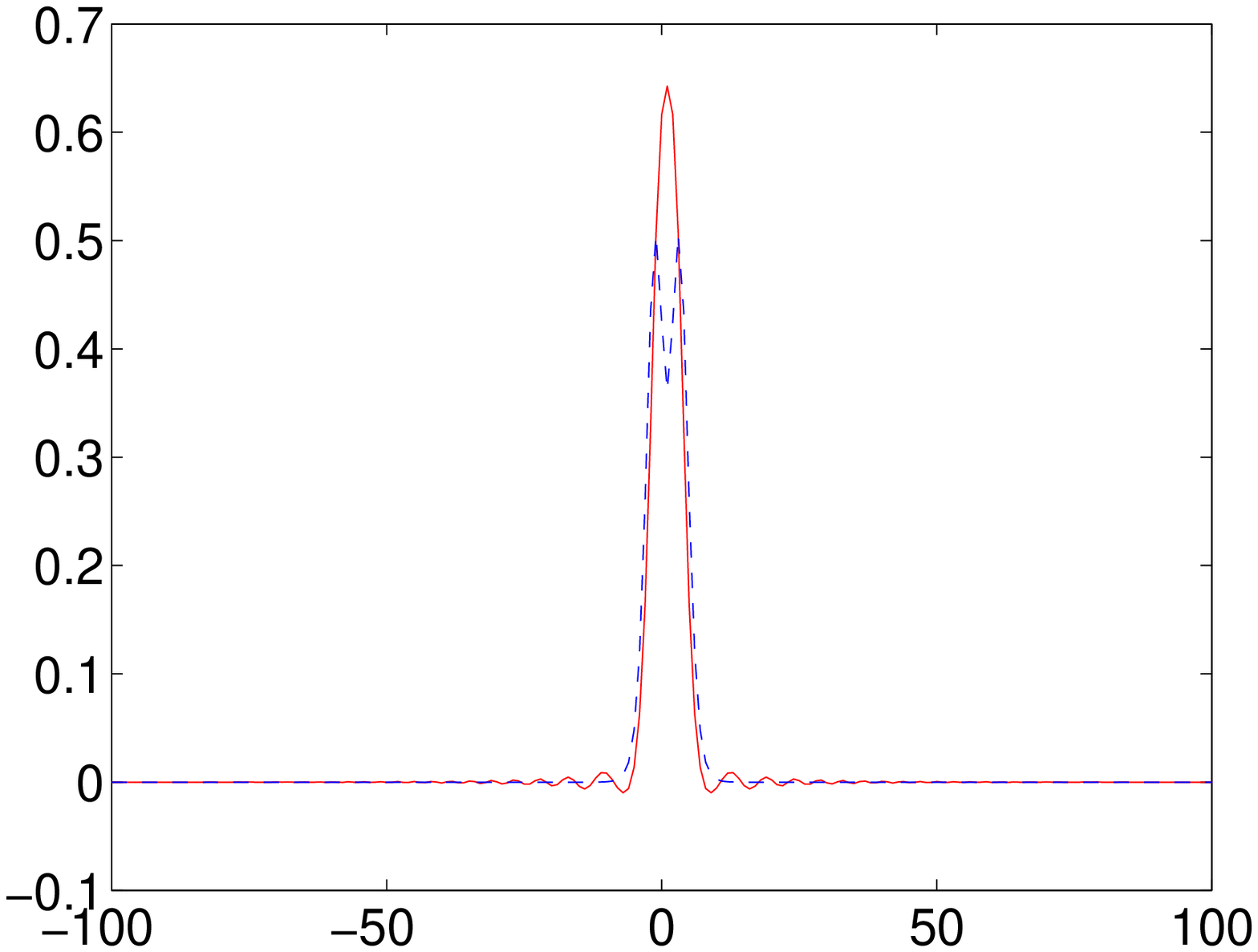}
	\includegraphics[width=.4\textwidth]{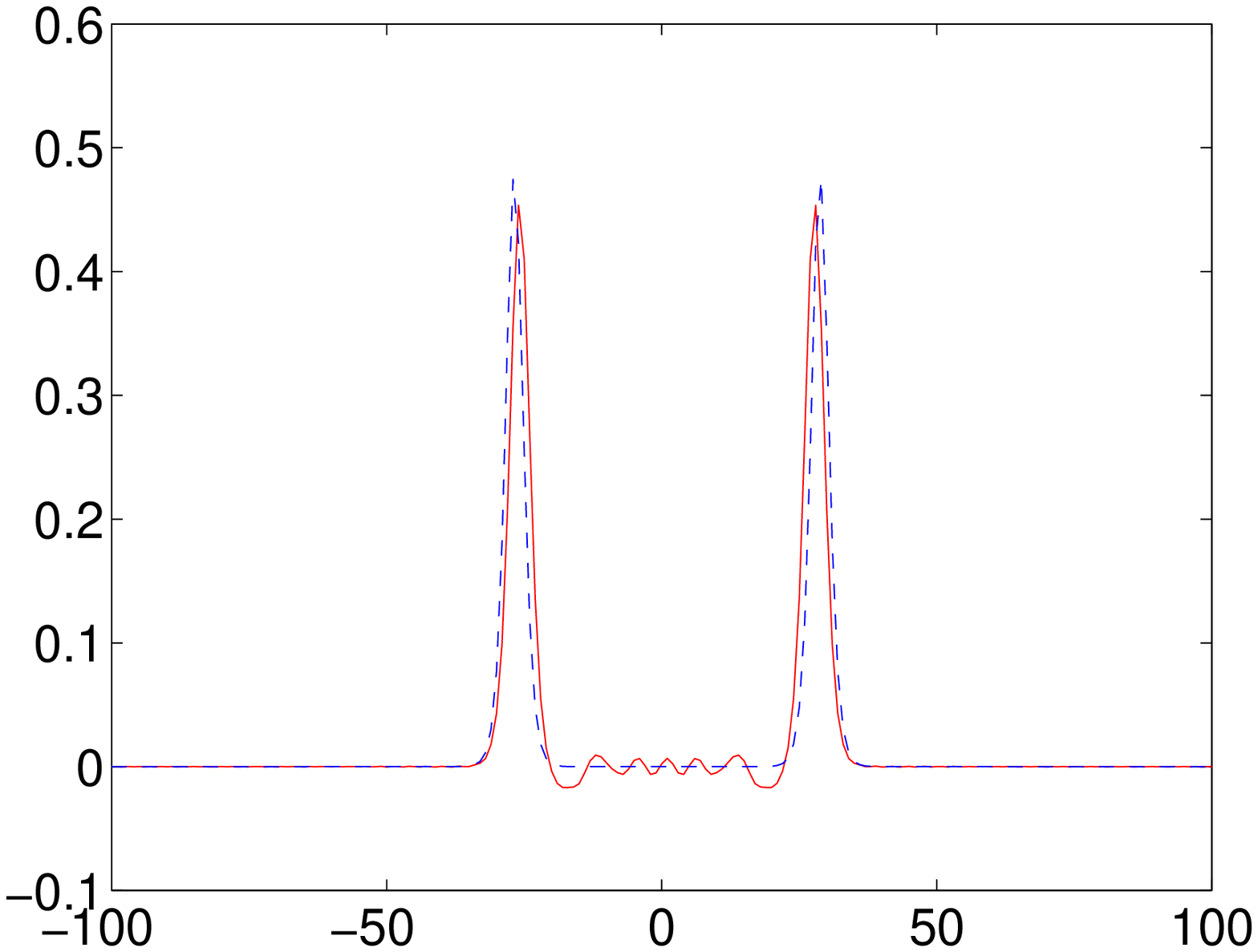}
	\includegraphics[width=.4\textwidth]{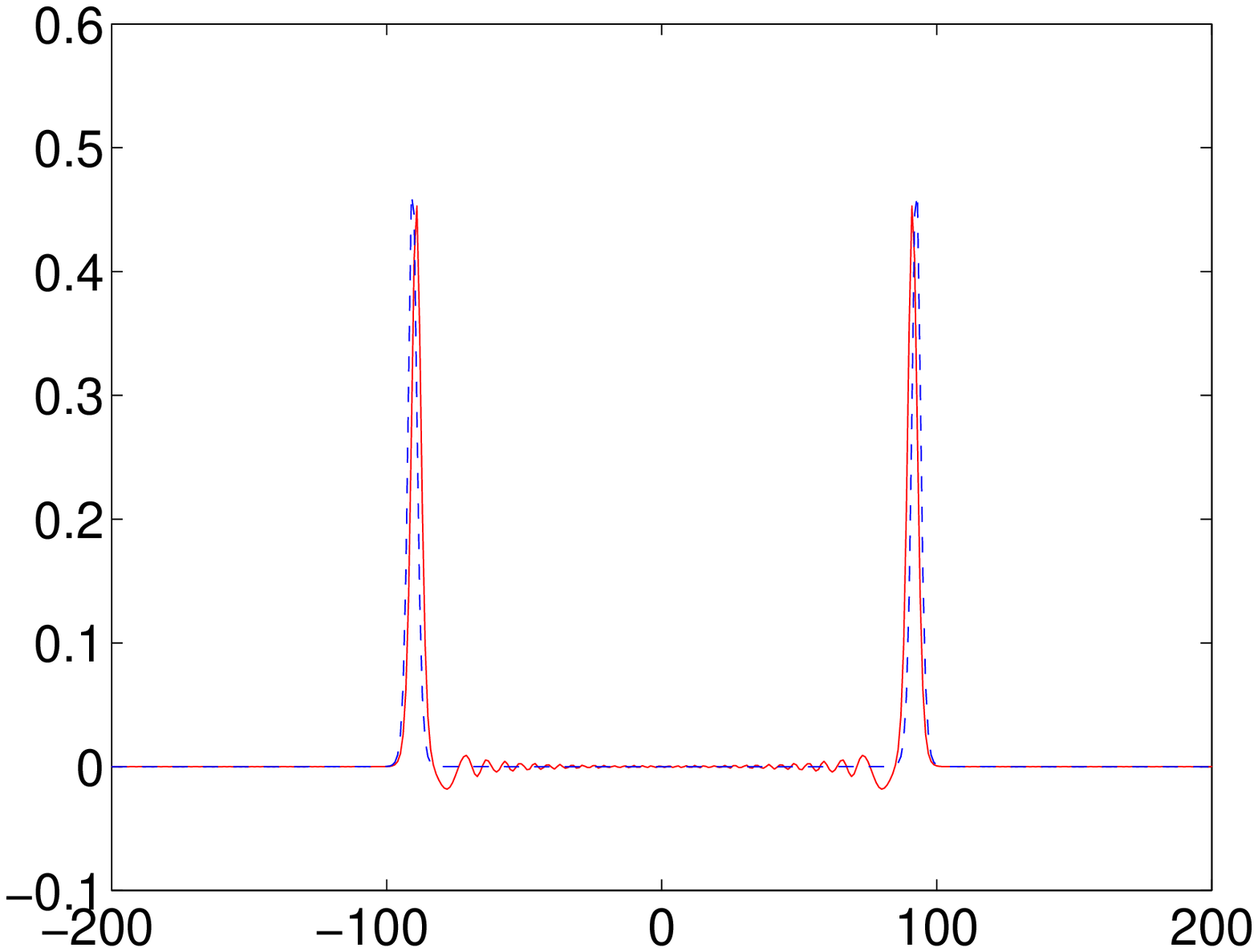}
	\includegraphics[width=.4\textwidth]{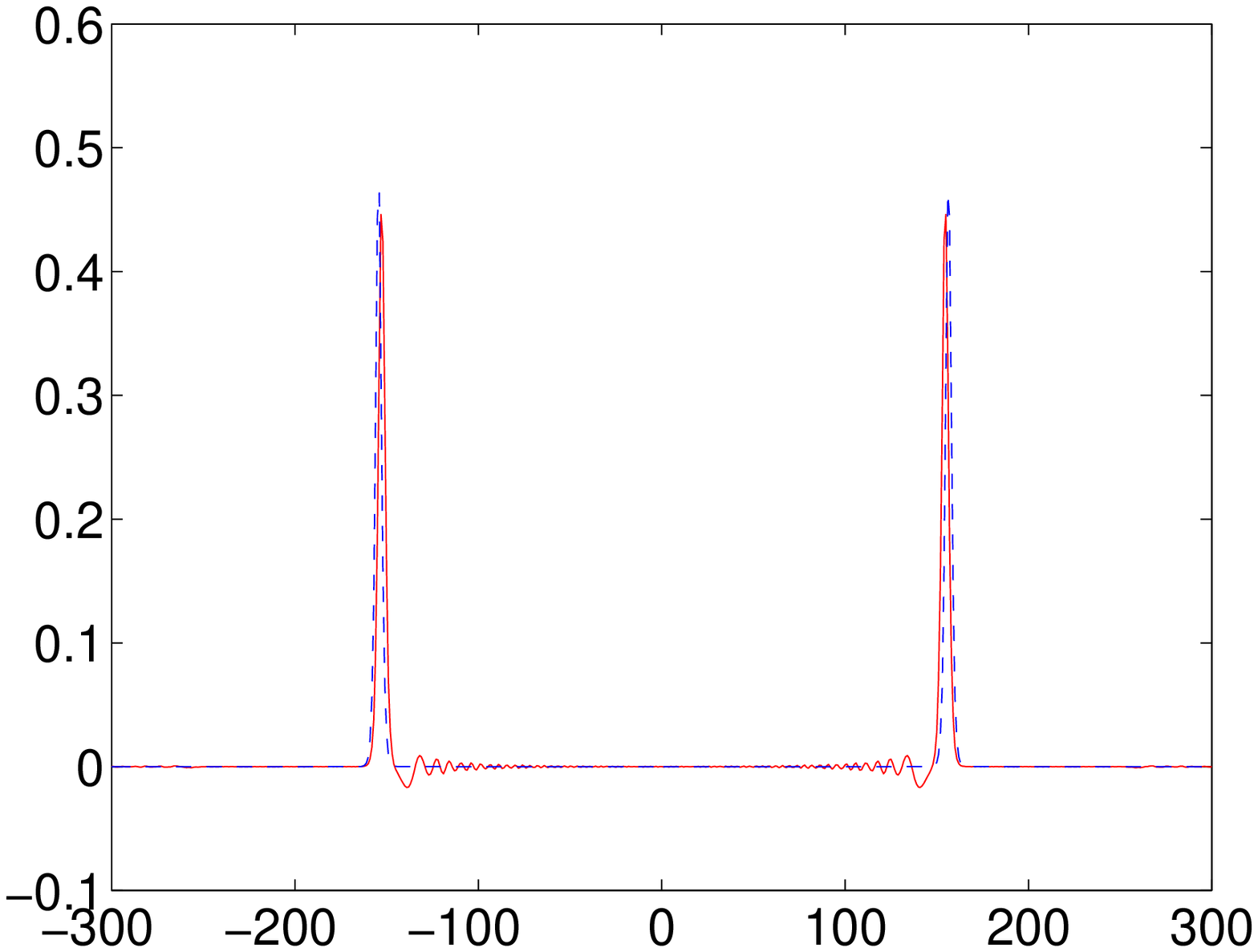}
	\end{center}
	\caption{(Color Online) In this dynamical evolution, the parameters
were chosen as $p = \frac{3}{2}$, $\delta_0=1$, $k_1=k_2 = 0.5$, $A_1=A_2=1.04$. The initial conditions consisted of a two-soliton solution with
waves of the same amplitude centered at $-100$ and $100$ 
at the Toda lattice level. 
From left to right, top to bottom, the snapshots at times 
$t=0, 50, 80, 100, 150, 200$ are shown. Once again, the solid (red) 
line denotes the numerical granular chain evolution 
dynamics while the dashed (blue) line 
stems 
from the exact Toda lattice two-soliton solution of  
Eq.~(\ref{TodaTwoSoliton}). The three curves are nearly coincident
for all the times considered.
\label{TodaTwoSnap2} }
\end{figure}

\begin{figure}[htbp]
	\begin{center}
	\includegraphics[width=.9\textwidth]{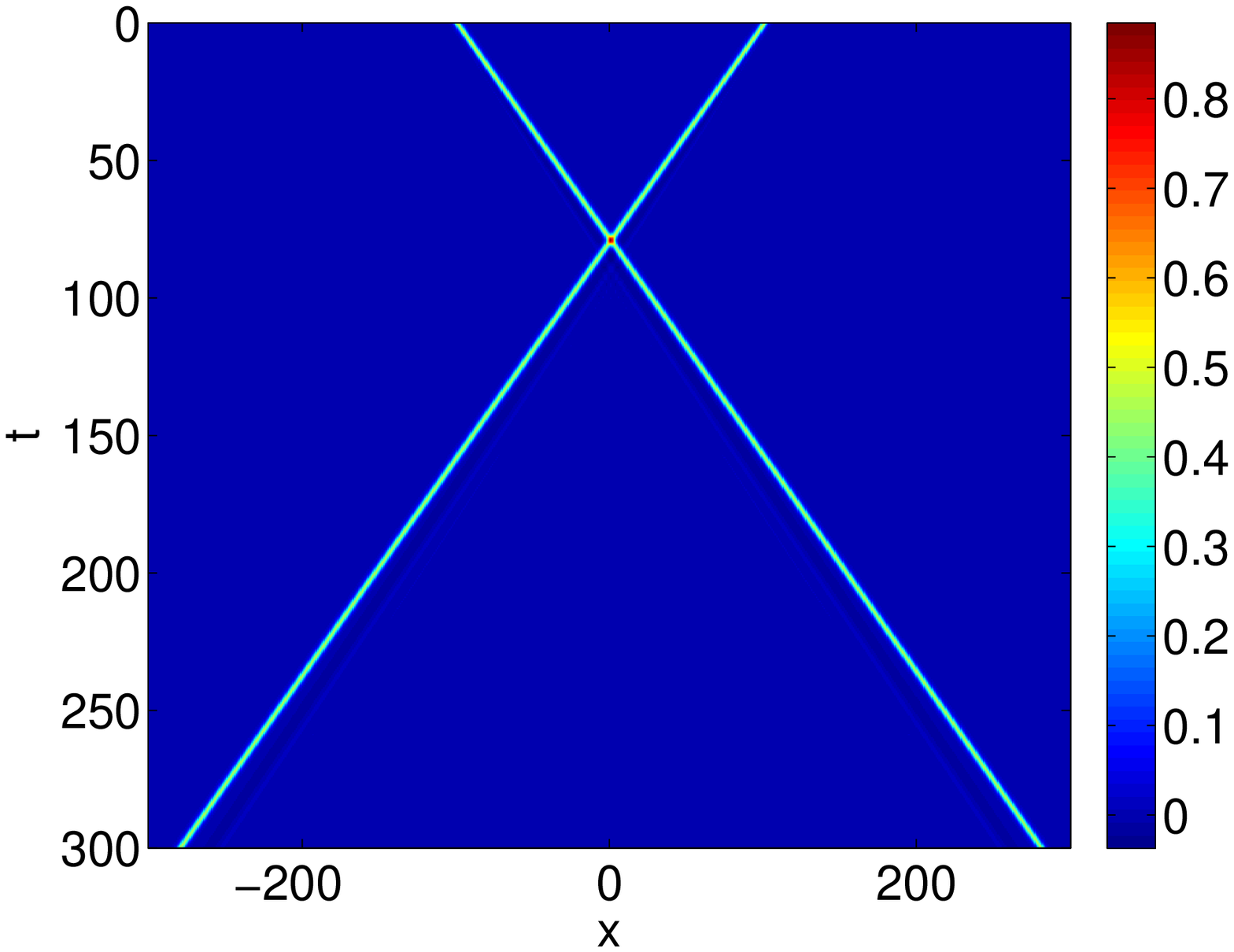}
	\includegraphics[width=.9\textwidth]{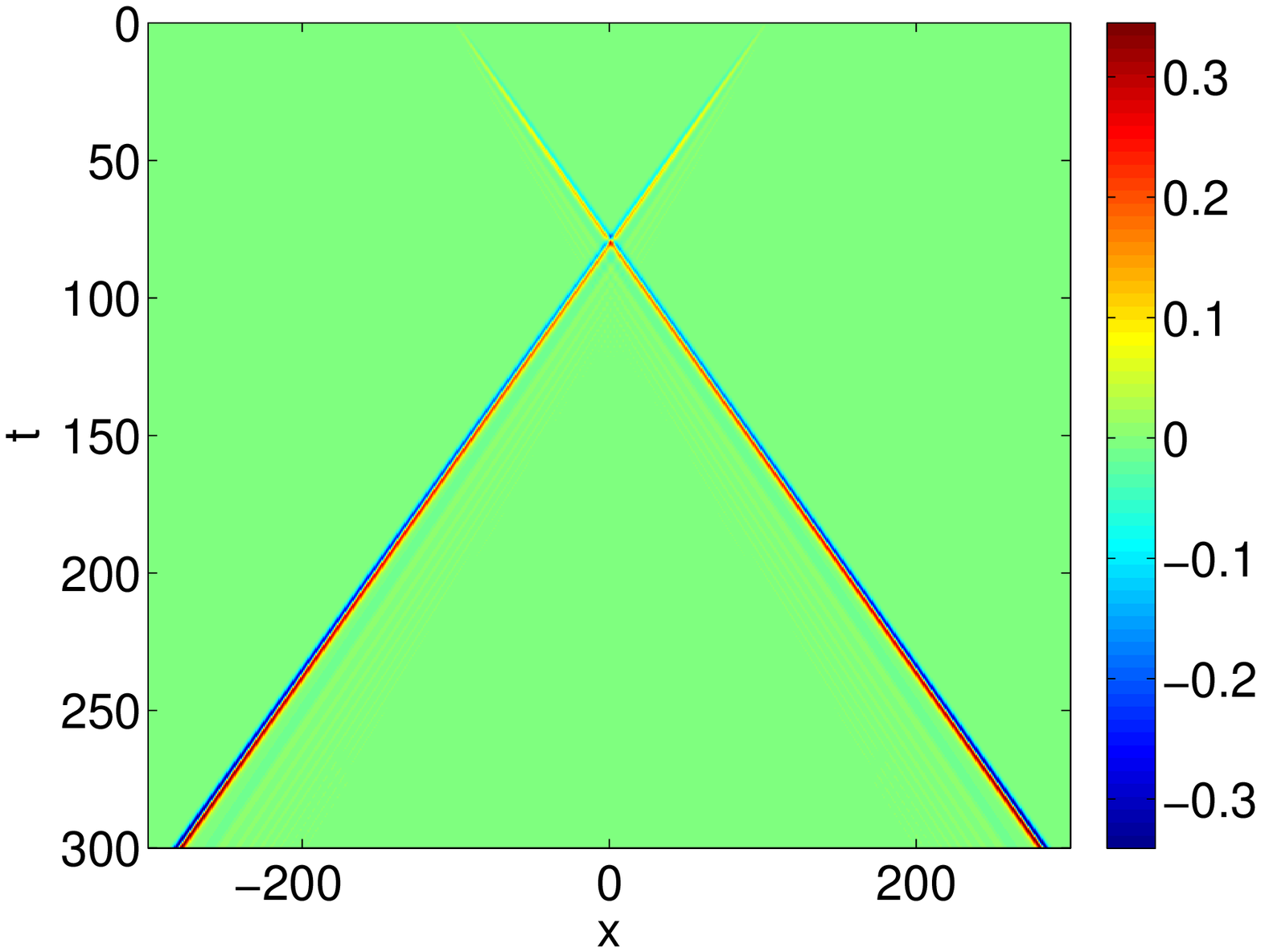}
	\end{center}
	\caption{(Color Online) 
The parameters and initial data have been chosen the
same as in Fig.~\ref{TodaTwoSnap2}, and the space-time contour plot
of the granular lattice evolution, as well as 
the difference of that from the two-soliton solution of Toda lattice
are shown.	
\label{TodaTwo2}}
\end{figure}

\begin{figure}[htbp]
	\begin{center}
	\includegraphics[width=.4\textwidth]{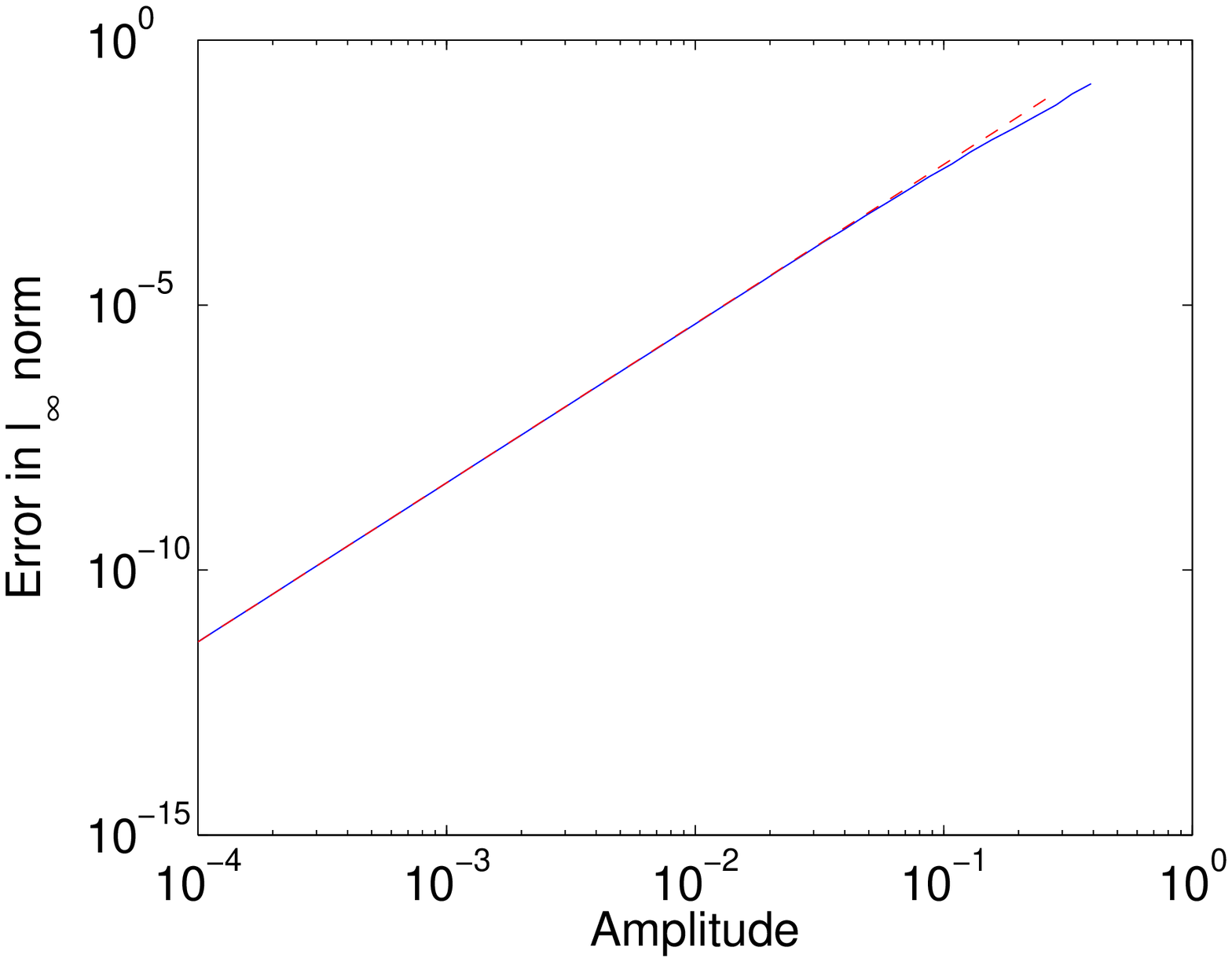}
	\includegraphics[width=.4\textwidth]{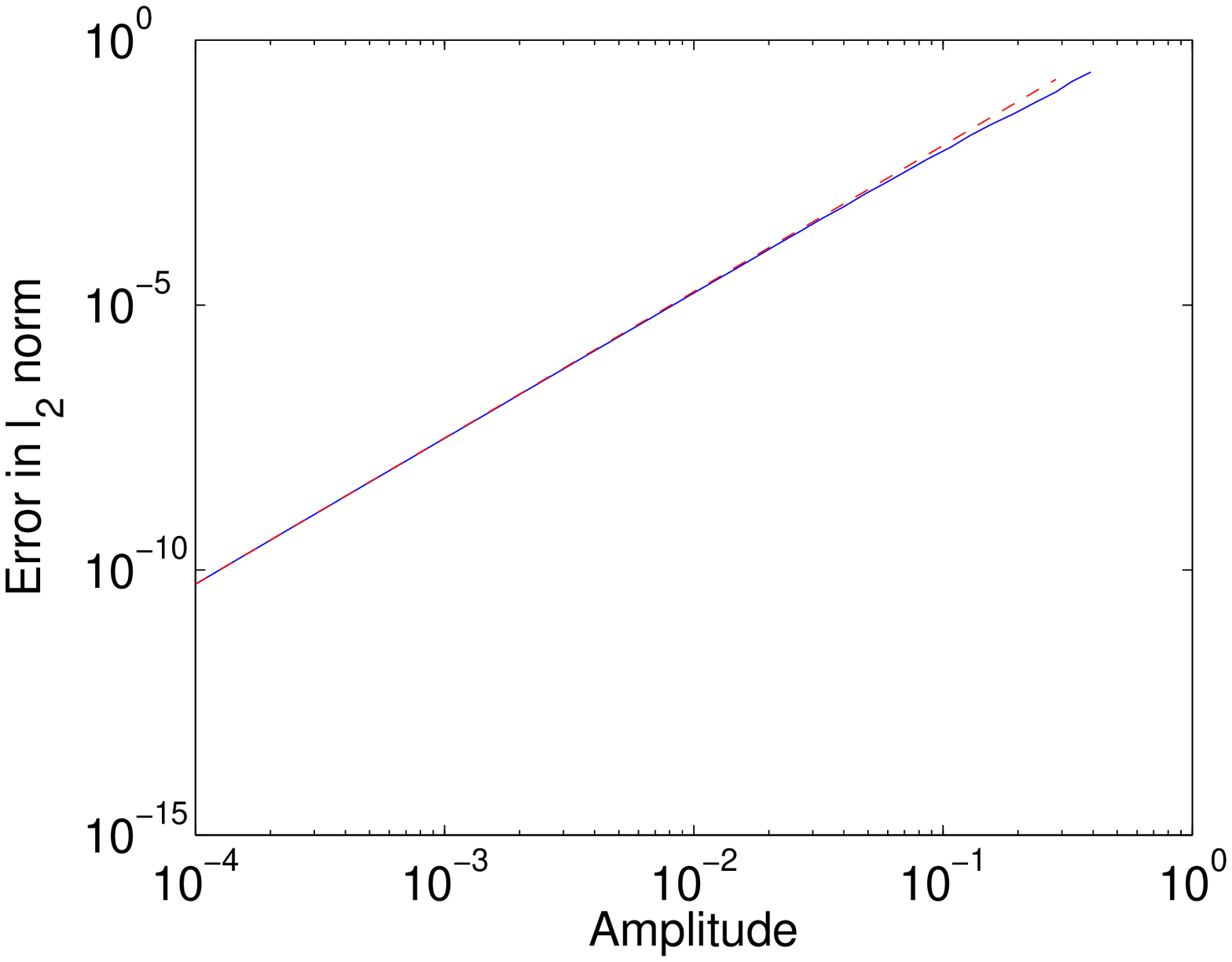}
	\end{center}
	\caption{(Color Online) The left panel shows the $l_\infty$ norm of the error 
(i.e., difference of granular evolution from the Toda lattice
2-soliton solution) until the two solitons are well separated after 
the collision, versus the amplitude of the initial data. 
The right panel is the $l_2$ norm of the same quantity. 
Both clearly represent a power law with a best fit exponent of 
$3.0010$ and $2.7557$ respectively (shown in red dash line).
\label{Error_GLmToda}}
\end{figure}

%\subsection{Two-Soliton solution, same direction}
Lastly, we explore the case of the Toda lattice approximation for
the case of two co-propagating solitary waves. I.e., recalling that
one of the advantages of the Toda lattice approximation is not only
its discrete nature, but also its ability to capture both co-propagating
and counter-propagating solutions,
we use the 2-soliton solution of Toda lattice of the form \cite{toda1}
\begin{eqnarray}
 {x}_{n-1}-x_n =S_{n-1}-2S_n+S_{n+1}
\label{TodaTwoSoliton2}
\end{eqnarray}
with 
\begin{eqnarray}
{S}_n =\ln \left\{ \cosh[k_1 (n-n_1)- \beta_1 t ] +B \cosh[k_2(n-n_2)-\beta_2 t ]\right\}.
\end{eqnarray}
%\subsubsection{ Counter-propagating soliton}
%\begin{eqnarray}
%\beta_1 = 2\sinh\frac{k_2}{2}\cosh\frac{k_1}{2}\\
%\beta_2 = 2\sinh\frac{k_1}{2}\cosh\frac{k_2}{2}\\
%B = \cosh\frac{k_1}{2}/\cosh\frac{k_2}{2}.
%\end{eqnarray}
For the waves propagating  in the same direction
\begin{eqnarray}\label{SD}
\beta_1 = 2\sinh\frac{k_1}{2}\cosh\frac{k_2}{2}\\
\beta_2 = 2\sinh\frac{k_2}{2}\cosh\frac{k_1}{2}\\
B = \sinh\frac{k_1}{2}/\sinh\frac{k_2}{2},
\end{eqnarray}
and the result of a typical example of the dynamical evolution
is shown in Figs.~\ref{TodaTwoRRSnap}-\ref{TodaTwoRR}. 
It can be clearly observed here that 
the 
co-propagating case yields a far less accurate description
than the counter-propagating one. This is presumably because
of the shorter (non-integrable) interaction time in the latter
in comparison to the former. 
Furthermore notice that again
the disparity between the two evolutions 
is far more pronounced for large amplitude waves,
as the small amplitude one is accurately captured throughout
the collision process. Nevertheless, once again our integrable
approximation is quite useful in providing at least a
qualitative, essentially analytical handle on the interaction dynamics
observed herein.

\begin{figure}[htbp]
	\begin{center}
	\includegraphics[width=.4\textwidth]{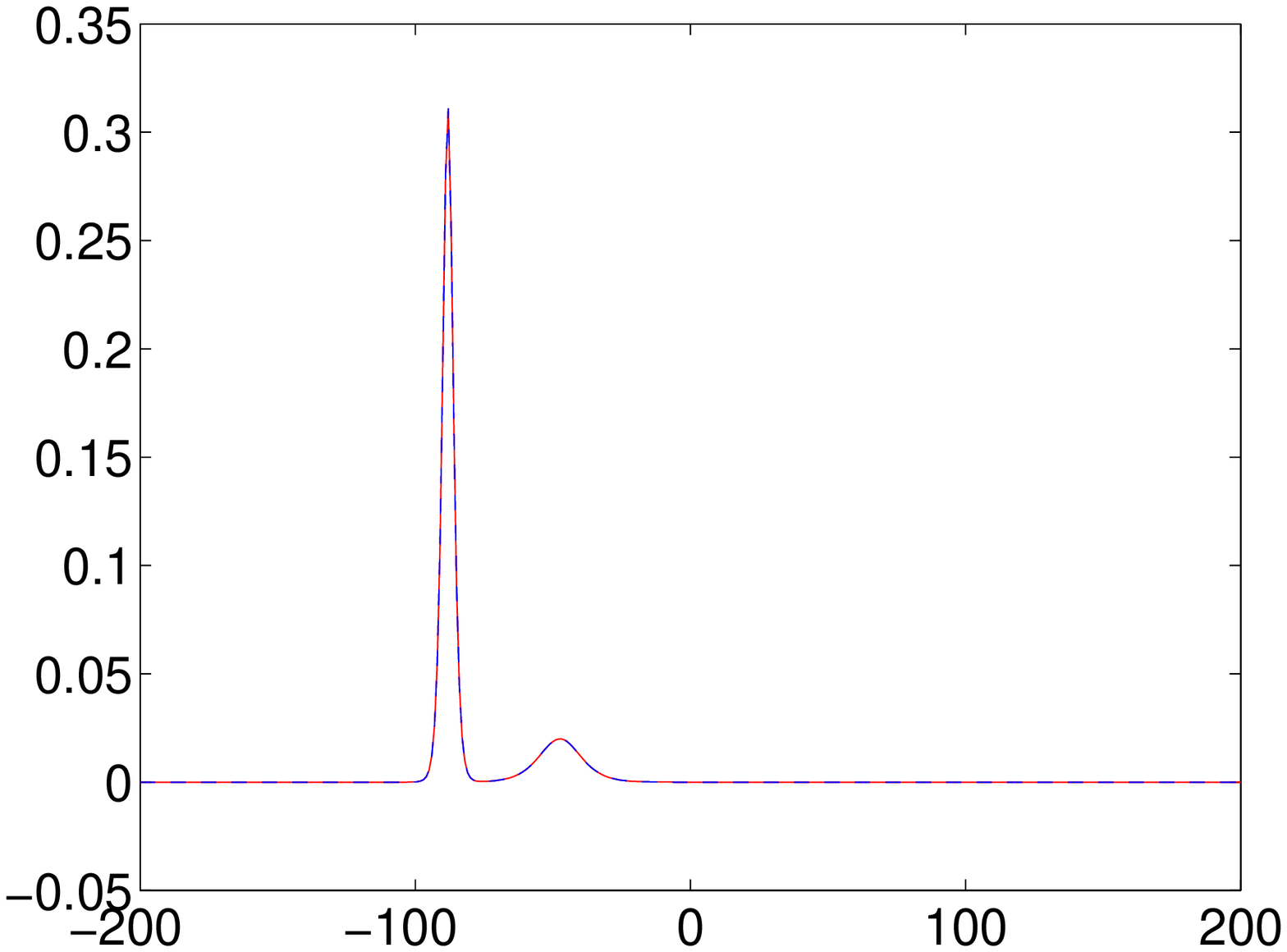}
	\includegraphics[width=.4\textwidth]{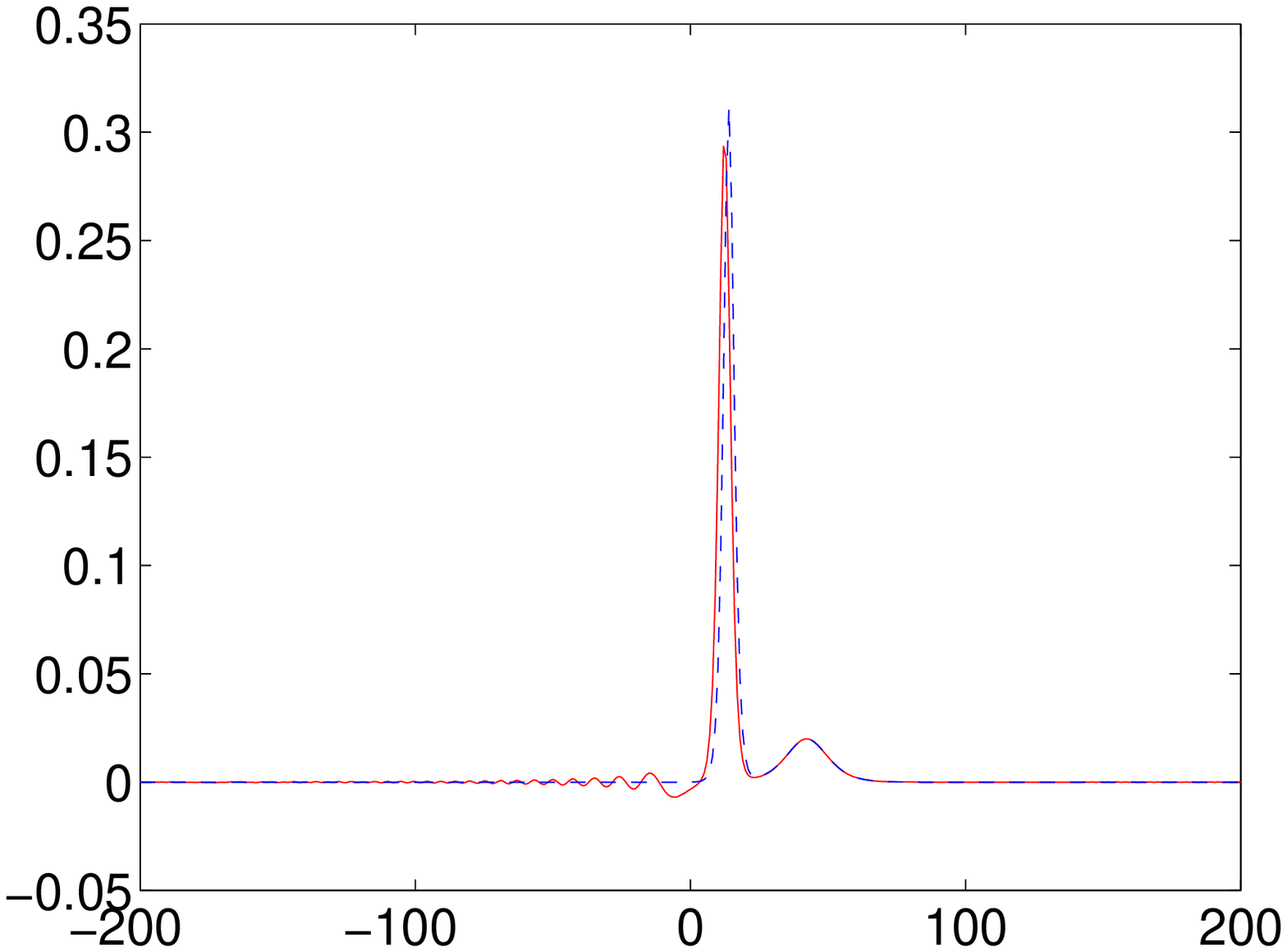}
	\includegraphics[width=.4\textwidth]{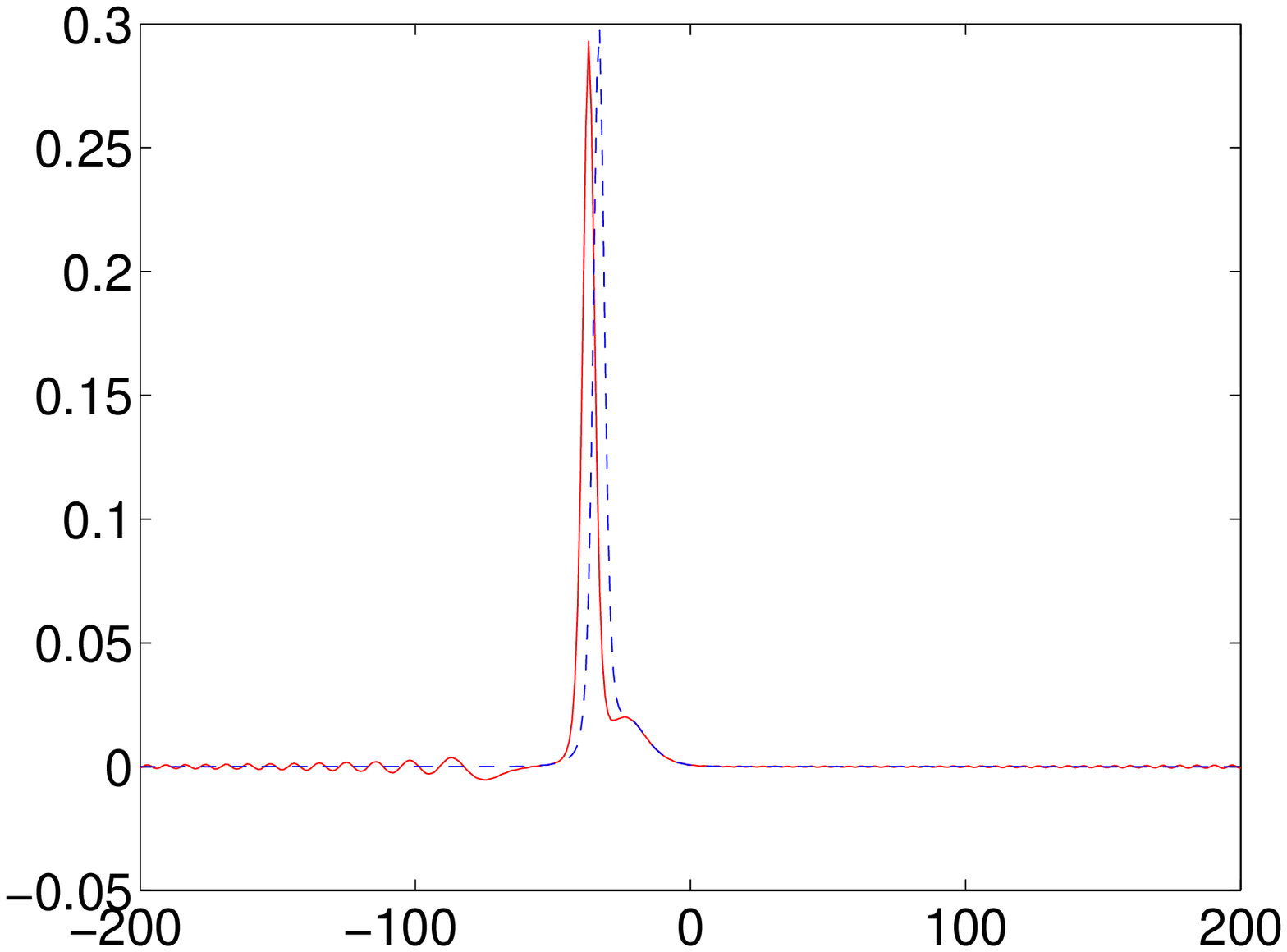}
	\includegraphics[width=.4\textwidth]{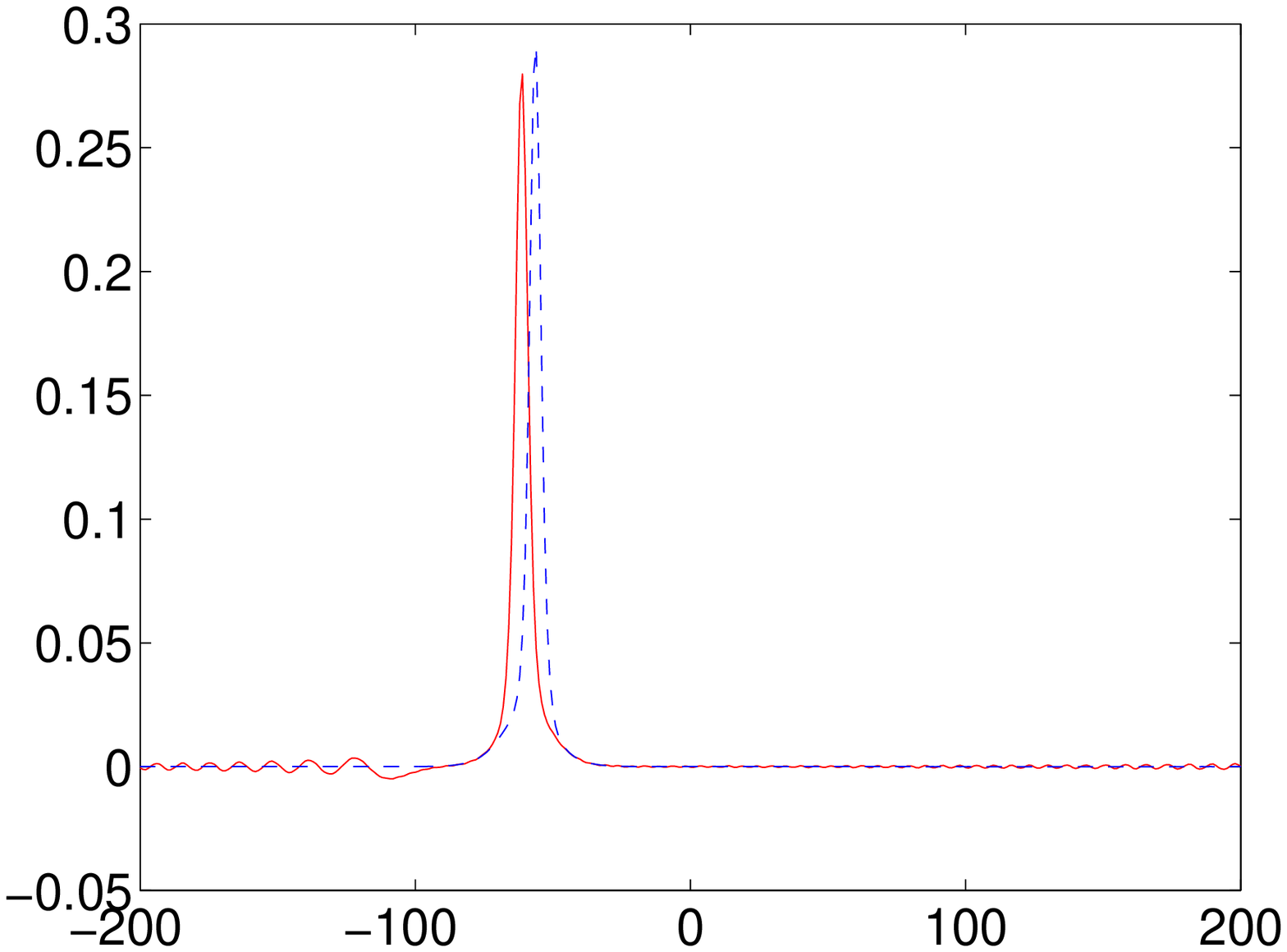}
	\includegraphics[width=.4\textwidth]{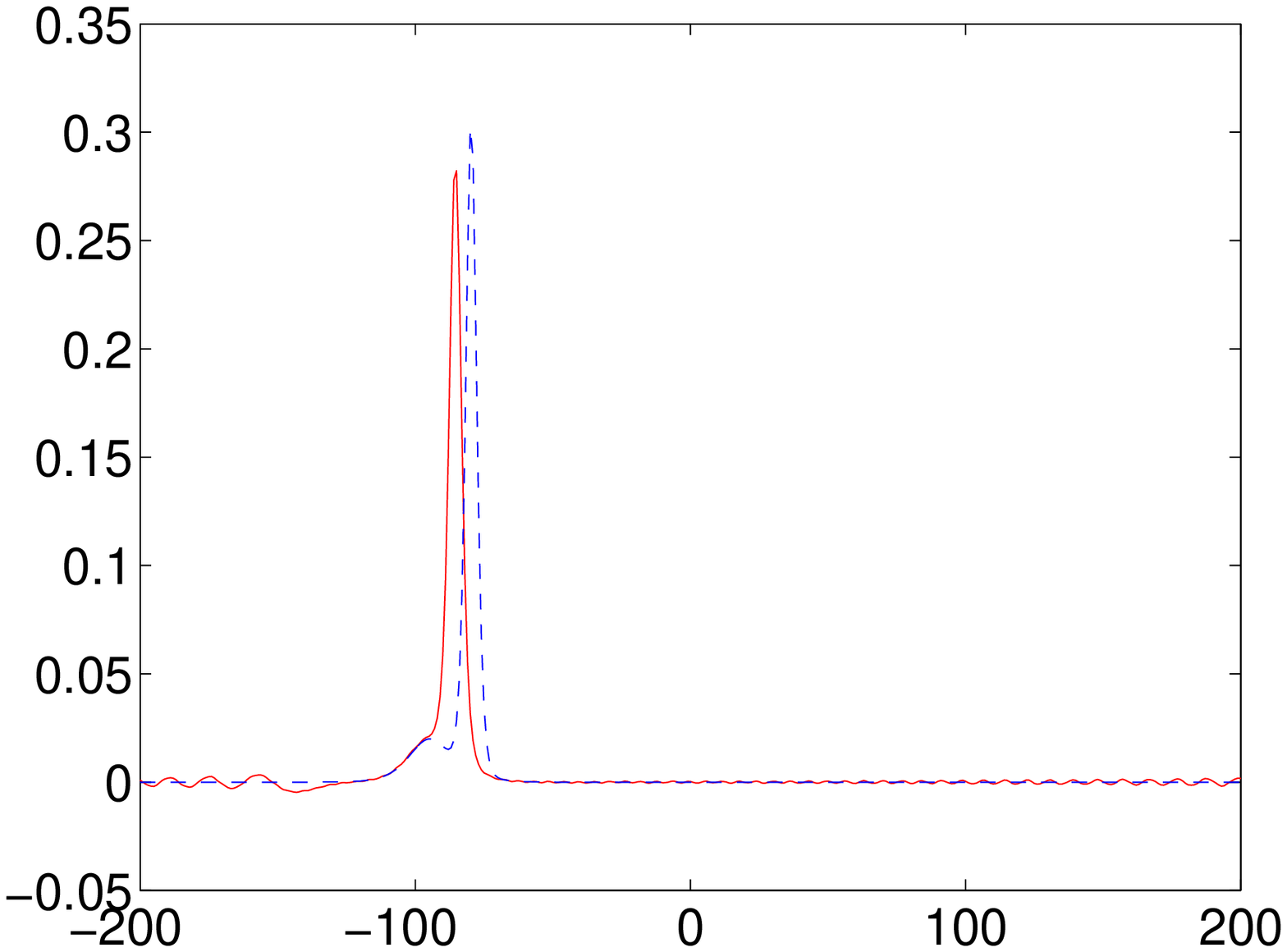}
	\includegraphics[width=.4\textwidth]{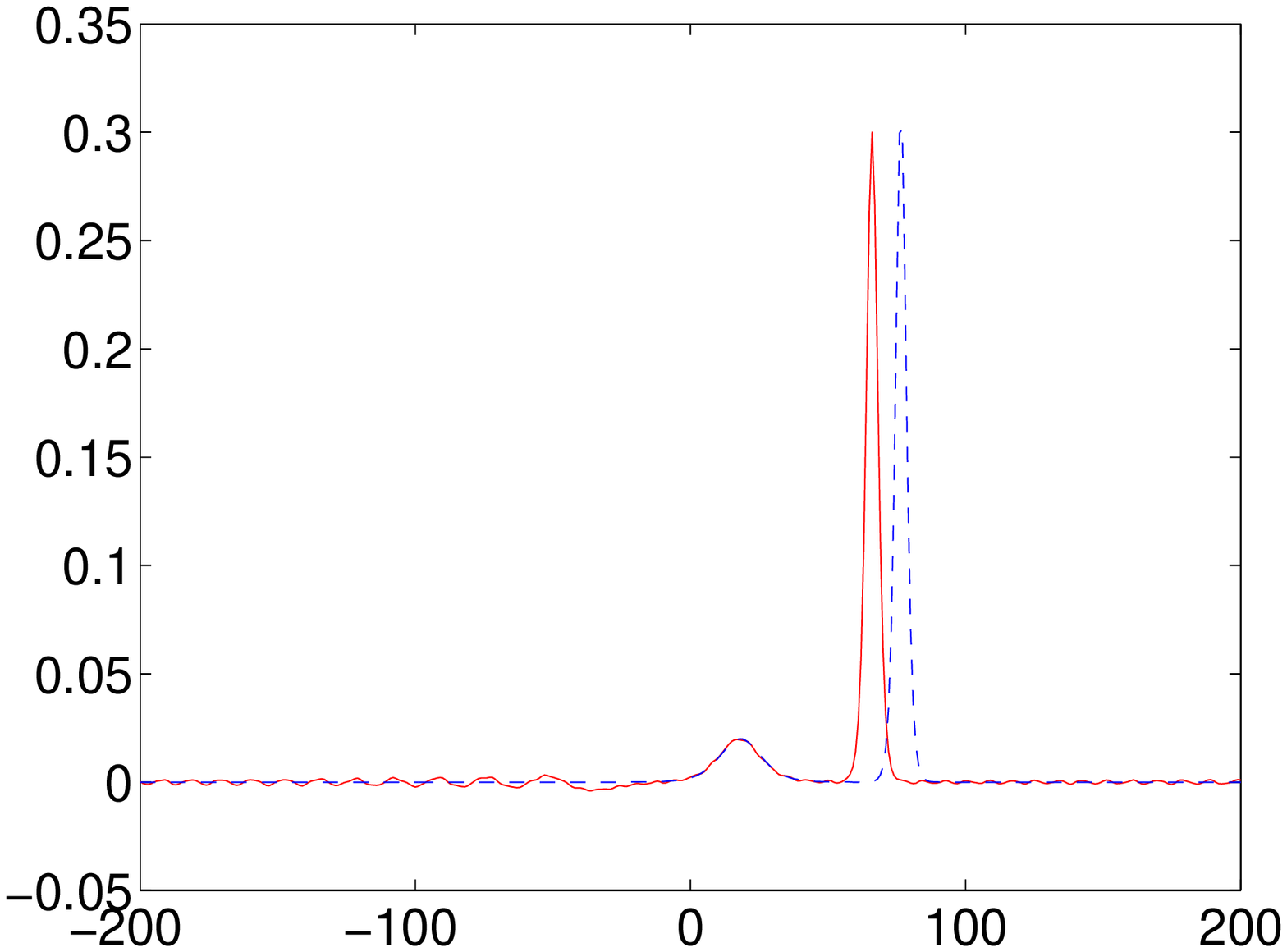}
	\end{center}
	\caption{(Color Online) The case of $p = \frac{3}{2}$, $\delta_0=1$, $k_1=0.3$, $k_2=0.5$, with an initial condition of two solitons at $-100$ and $-80$ 
propagating in the same direction (to the right) is shown. From left to 
right, top to bottom, the snapshots of 
$t=0, 400, 1000, 1300, 1600, 3000$ are shown. The (red) solid line 
represents the granular lattice evolution and the (blue) dashed line the 
Toda 2-soliton solution of Eq.~(\ref{TodaTwoSoliton2}). \label{TodaTwoRRSnap} }
\end{figure}

\begin{figure}[htbp]
	\begin{center}
	\includegraphics[width=.9\textwidth]{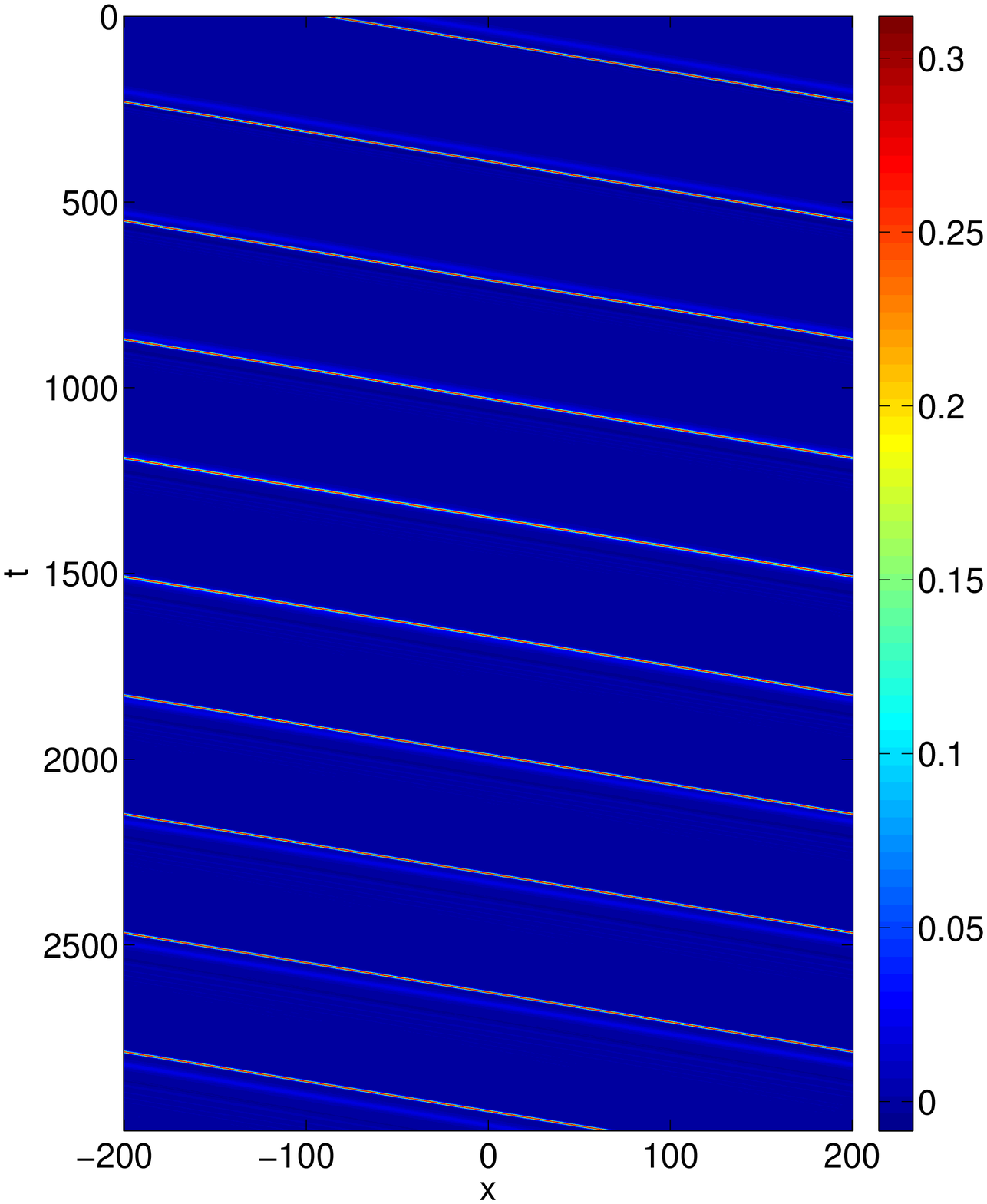}
	\end{center}
	\caption{(Color Online) Same as the previous figure but now through the
space-time evolution shown in a contour plot.
\label{TodaTwoRR}}
\end{figure}

\section{Conclusions and Future Challenges}

We believe that the present work provides an insightful and meaningful
(not only qualitative but even quantitatively, where appropriate)
way for considering
the interactions of solitary waves in the 
realm of granular crystal dynamics. 
What makes this work particularly timely 
and relevant is that the granular chain problem 
is currently both theoretically
interesting and experimentally, as well as computationally tractable. 
Two types of approximations were proposed
herein for developing a qualitative and even semi-quantitative understanding
of such collisions. The first was based on the well known  KdV equation. While
this is a useful approximation, some of its limitations were discussed,
the 
most notable being the continuum, long-wavelength nature of the approximation,
as well as the uni-directional character of the interaction (i.e., 
co-propagating waves). To avoid these constraints, a second approximation,
based on the Toda lattice was also presented. The latter provided
a high-quality description especially of counter-propagating wave
collisions, while its 2-soliton solutions can also be used for
capturing co-propagating cases, at least qualitatively. It is relevant to note
here that at the level of $\delta_0=0$ (i.e., when the precompression
is absent) such collisions have {\it already} been studied
in~\cite{sen1,sen00,jobmelo}. It thus seems 
that an extension of that work to experimentally consider 
collisions in the more
theoretically and analytically
tractable case of finite precompression would be possible, as
much as it would be desirable.

We believe that this line of thinking, and especially the approximation
of using a discrete model such as the Toda lattice  could provide
a useful tool for  understanding different forms of solitary wave 
interactions in Hertzian 
systems~\cite{Theocharis2009,Job2009,Boechler2010,vakakis2,vakakis3}. 
%\sout{dynamics and coherent structures of heterogeneous (such as dimers)
%Hertzian chains~\cite{Boechler2010,vakakis2,vakakis3}.} 
A more
ambitious generalization would involve the consideration
of two-dimensional lattices and the potential reduction
thereof to Kadomtsev-Petviashvilli continuum models (i.e.,
2d generalizations of the KdV) or perhaps to other lattice models in order 
to understand the dynamics of higher dimensional such chains.
Another challenging problem would be to obtain some analytical
understanding of the collisions without precompression; the difficulty
in that case stems from the absence of a well-established,
yet analytically tractable (discrete or continuum) description
for capturing multi-soliton interactions.
Such directions are currently under consideration and will
be presented in future publications.

\appendix
\section{Estimates in the small amplitude regime}
To obtain rigorous estimates it is first useful to write the general FPU chain as a first order system.
Observe that the chain of oscillators
\[ \ddot{q}_n = V'(q_{n-1}-q_n) - V'(q_n - q_{n+1}) \]
can be rewritten as the system
\begin{equation} \left\{ \ba{l} \dot{r}_n = p_{n-1}-p_n \\ \dot{p}_n = V'(r_n) - V'(r_{n+1}) \ea \right. \label{FPU} \end{equation}
upon making the change of variables $p_n = \dot{q}_n$, $r_n = q_{n-1}-q_n$.  Both the granular chain and the Toda lattice are special cases with $V'(q) = (\delta_0 + q)^p$ and $V'(q) =e^q$ respectively.  Writing the Hamiltonian $H(r,p) = \sum_n \frac{1}{2}p_n^2 + V(r_n)$ and the operator $J$ defined by $[J(r,p)]_n = (p_{n-1}-p_n, r_n-r_{n+1})$ the equation \eqref{FPU} is rewritten as the system of Hamiltonian ODEs 
\[ \frac{d}{dt}\left(\ba{c} r \\ p \ea \right) = JH'\left( \left(\ba{c} r \\ p \ea \right) \right). \]

Notice that in the above expression 
%$H'(r,p)(\rho,\pi) = (V''( r ) \rho, p)$.
$[H'\left( \left(\ba{c} r \\ p \ea \right) \right)]_n = \left(\ba{c} V'(r_n) \\ p_n \ea \right)$.
Having provided this general setup for our Toda and granular
Hamiltonian chains, we now proceed to present the proposition that
estimates the proximity between the 2-soliton solutions of these two models
for small amplitude initial data (quantified by
$\delta$ below) and long times (quantified by $k^{-1}$ below).

\begin{proposition}
Let $H_*$ denote the Toda Hamiltonian and let $\left(r_*(t;k_1,k_2,\gamma_1,\gamma_2),p_*(t;k_1,k_2,\gamma_1,\gamma_2)\right)$ denote its four-parameter family of two-soliton solutions.  Let $V$ be a general smooth interaction potential satisfying $V(0) = V'(0) = 0$ as well as $V''(0) > 0$ and $V'''(0) \ne 0$ and let $(r,p)$ denote the solution of the corresponding FPU lattice with initial condition $(r_*(0),p_*(0))$.  

There is a $\delta > 0$ such that so long as $|k_j| < \delta$ for $j = 1,2$, then for any $\eta \in (0,1)$ the estimate 
\[ \|r(t)-r_*(t)\| + \|p(t)-p_*(t)\| < Ck^{5.5-2 \eta} \]
holds for $0 \le t \le k^{-(1+\eta)}$.
\end{proposition}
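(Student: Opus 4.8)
\emph{Proof plan.} The plan is to recast the comparison as a forced perturbation problem about the exact Toda $2$-soliton and to close a Gronwall/bootstrap estimate in $\ell_2\times\ell_2$, using the fact that the linear discrete-wave group is an isometry of the natural energy norm. First I would use the hypotheses $V''(0)>0$ and $V'''(0)\ne0$ to rescale the strain, the momentum and time by $k$-independent constants so that the rescaled potential $\widehat V$ matches the Toda potential $V_*$ (normalized so that $H_*=\sum_n\tfrac12 p_n^2+V_*(r_n)$) up to and including cubic order: $\widehat V'(r)-V_*'(r)=\tfrac16\big(\widehat V''''(0)-V_*''''(0)\big)r^3+\mathcal O(r^4)$, a discrepancy that is generically nonzero and, as already noted in the main text, cannot be removed by any further rescaling of the granular chain. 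Writing the FPU lattice in the first-order Hamiltonian form $\dot u=JH'(u)$ of \eqref{FPU}, and setting $v:=u-u_*$ with $u_*=(r_*,p_*)$ the Toda $2$-soliton, the error solves $v(0)=0$ together with
\[ \dot v \;=\; JH''(u_*)\,v \;+\; F(t) \;+\; J R_2(v;u_*),\qquad F(t):=J\big(H'-H_*'\big)(u_*(t)), \]
where $F$ is a residual forcing depending only on the explicit Toda solution and $R_2$ collects the quadratic-and-higher remainder in $v$.

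The next step is a set of uniform estimates on the Toda $2$-soliton when $|k_j|<\delta$, writing $k:=\max(|k_1|,|k_2|)$. Each Toda soliton of parameter $k_j$ has amplitude $\mathcal O(k_j^2)$ and width $\mathcal O(k_j^{-1})$, uniformly in $t$ and through the collision, so $\|r_*(t)\|_{\ell_\infty}+\|p_*(t)\|_{\ell_\infty}\le Ck^2$. Because $(H'-H_*')(u_*)$ vanishes to cubic order in $r_*$ \emph{and} $J$ acts by a discrete difference, $F_n(t)$ has the schematic form $g(r_{*,n})-g(r_{*,n+1})$ with $g(q)=\mathcal O(q^3)$; combined with the $\mathcal O(k^{-1})$ width this yields the sharpened bound $\|F(t)\|_{\ell_2}\le Ck^{6.5}$, rather than the crude $\mathcal O(k^{5.5})$ one gets from $\|J\|\le2$ alone, and it is precisely this extra power of $k$ that buys time scales of order $k^{-(1+\eta)}$. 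Finally, split $JH''(u_*)=JH''(0)+\mathcal B(t)$, where $JH''(0)$ generates the linear discrete-wave group (which preserves the $\ell_2$ energy norm exactly), and $\mathcal B(t)$ is multiplication by $\widehat V''(r_{*,n})-\widehat V''(0)$ in the strain slot, so $\|\mathcal B(t)\|_{\ell_2\to\ell_2}\le C\|r_*(t)\|_{\ell_\infty}\le Ck^2$.

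Combining these via Duhamel's formula and the isometry of the free group gives $\|v(t)\|\le\int_0^t\big(\|F(s)\|+Ck^2\|v(s)\|+C\|v(s)\|^2\big)\,ds$. I would then run a continuity/bootstrap argument on $0\le t\le T:=k^{-(1+\eta)}$: assuming a priori $\|v(s)\|\le M$ with $M$ of order $k^{5.5-2\eta}$, the quadratic term is $\mathcal O(M^2)=o(\sup_s\|F(s)\|)$, so Gronwall gives $\|v(t)\|\le 2T\sup_s\|F(s)\|\,e^{Ck^2T}$; since $k^2T\le k^{1-\eta}\to0$ this is $\le Ck^{-(1+\eta)}k^{6.5}=Ck^{5.5-\eta}\le M$ once $\delta$ is small, which closes the bootstrap, and undoing the $k$-independent rescaling yields the stated estimate for every $\eta\in(0,1)$. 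Local well-posedness of the FPU flow in $\ell_2\times\ell_2$ near the origin (using smoothness of $V$) legitimizes the continuity argument.

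The step I expect to be the main obstacle is controlling the linearized flow $e^{t\,JH''(u_*)}$ of the Toda $2$-soliton over the long time $k^{-(1+\eta)}$: the linearization of an integrable lattice about a multi-soliton carries a generalized kernel spanned by derivatives in the soliton parameters, with associated secular (polynomially growing) modes that threaten to amplify $F$. The resolution is that $v(0)=0$, so only the forced response appears and one must show $\int_0^t e^{(t-s)JH''(u_*)}F(s)\,ds$ does not feel the secular growth; this can be done either, as sketched above, by dominating the evolution with the $\ell_2$-isometric constant-coefficient group and treating $\mathcal B(t)$ perturbatively — legitimate precisely because $\|\mathcal B\|\,T=\mathcal O(k^{1-\eta})\to0$ — or by a modulation argument projecting off the parameter directions and checking that $F$, being a discrete gradient, has small component along them. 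The slightly weaker exponent $5.5-2\eta$ (rather than the $5.5-\eta$ of the crude estimate) is the margin absorbed in making the soliton and residual bounds hold uniformly in $t$ and in $(k_1,k_2)$, in particular over the collision window where the single-soliton width estimate must be supplemented.
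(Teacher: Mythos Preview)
Your proposal is correct and follows essentially the same route as the paper: the paper also rescales to match potentials through cubic order, isolates the residual $F(x_*)-F_*(x_*)$ and bounds it by $Ck^{6.5}$ using precisely your observation that $J$ acts as a discrete difference on a cubic-in-$r_*$ quantity (this is the content of its Lemmas~1 and~3, the latter giving $r_{*,n+1}-r_{*,n}=\mathcal O(k^3e^{-Ckn})$), and then closes by a Gronwall/bootstrap exploiting the skew-symmetry $\langle JY,Y\rangle=0$ (its Lemma~2), which is the energy-method counterpart of your Duhamel argument with the isometric free group. Your identification of the linearization splitting and the smallness $\|\mathcal B\|T=\mathcal O(k^{1-\eta})$ is exactly what the paper uses to avoid the secular-mode issue you flag.
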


In the case of counterpropagating solitary waves, the time scale $k^{-1}$ is sufficiently long for the waves to pass through each other.  The content of the theorem is that the amount of energy that is transferred from coherent modes to radiative modes is very small compared to the energy in the coherent modes.

The proposition can be regarded as a corrolary of three lemmas.
Before we embark into their technical description, let us give a
brief outline of the physical significance of each one.
Lemma 3 below shows that in the context of the Toda 2-soliton solution, 
the interaction of two broad shallow waves does not produce high 
(i.e. order greater than $k$) frequency ripples i.e., ``radiation''
corresponding to such wavenumbers.
Lemma 1 makes use of lemma 3 to quantify the ``local truncation error'' 
for the scheme given by evolving a Toda 2-soliton  in lieu of FPU.
I.e., when we evolve with Toda 2-soliton initial conditions 
in our FPU (non-integrable) lattice instead of the
integrable Toda one, there is a local truncation error stemming 
from the difference between the two lattice dynamics. This lemma
quantifies this difference as a function of the solution amplitude
(represented by $\delta$). Finally, lemma 2 estimates how the difference
of our FPU-type lattice and the Toda lattice evolves over time
on the basis of the above local truncation error and how the latter 
``accumulates'' over a long interval of time $T$ (characterized by $k^{-1}$).
%Hereafter, having explained the physical motivation of these lemmas,
%we focus on their technical details for the more mathematically
%inclined readers.

\begin{lemma}
Let $x_*$ denote a Toda 2-soliton solution. Let $J$, $H$ and $H_*$ be defined as above.  Define $F(x) = JH'(x)$ and define $F_*(x) = JH_{*}'(x)$.
Let $\kappa_1$ and $\kappa_2$ be fixed numbers and let the amplitude parameters for $x_*$ be given by $k_1 = \delta \kappa_1$ and $k_2 = \delta \kappa_2$.  

There is a $\delta_1$ so that for all positive $\delta < \delta_1$ the following hold: 

\[ \| F'(x_*) - J\| \le C\|Jx_*\| \le C \delta^{2.5} \]
\[ \|F''(x_*) \| \le 1 + \sup_n V'''(x_*) \le C \]
\[ \| F(x_*) - F_*(x_*) \| \le C\|Jx_*^3\| \le C\|Jx\| \|x\|_\infty^2 \le C \delta^{6.5} \]
\end{lemma}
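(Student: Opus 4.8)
The plan is to reduce all three inequalities to elementary facts about the operators $J$, $H''$, $H'''$ together with the size and slow variation of the Toda $2$-soliton. First I would note that, since $J$ is a bounded discrete difference operator ($\|J\|\le 2$) and $H$ is separable, $F'(x)=JH''(x)$ and $F''(x)=JH'''(x)$, where $H''(x)$ and $H'''(x)$ are the diagonal (local) multiplication operators $[H''(x)v]_n=(V''(r_n)\rho_n,\pi_n)$ and $[H'''(x)(v,v)]_n=(V'''(r_n)\rho_n^2,0)$ for $v=(\rho,\pi)$. I would then record the scaling facts for $x_*=(r_*,p_*)$ with $k_j=\delta\kappa_j$: each constituent soliton has strain amplitude $O(\delta^2)$ and width $O(\delta^{-1})$, so $\|x_*\|_{\ell^\infty}\le C\delta^2$ and $\|x_*\|_{\ell^2}\le C\delta^{3/2}$, and since the profiles vary on scale $\delta^{-1}$ the first difference $J$ costs a factor $\delta$ on the $\ell^2$ norm, giving $\|Jx_*\|\le C\delta^{2.5}$. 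These bounds must hold uniformly in $t$ across the collision window, which is precisely the content of Lemma~3 (the interaction of two broad shallow waves produces no $O(1)$-frequency ripples), so I would invoke Lemma~3 at this point.

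Given this, the first two estimates are short. For the first, $F'(x_*)-J=J\bigl(H''(x_*)-I\bigr)$, and since $V''(0)=1$ Taylor's theorem gives $|V''(r_{*,n})-1|\le\bigl(\sup_{|s|\le\|r_*\|_\infty}|V'''(s)|\bigr)|r_{*,n}|\le C|r_{*,n}|$, so $\|H''(x_*)-I\|\le C\|r_*\|_\infty$; expressing $H''(x_*)-I$ through the difference $Jx_*$ (which is again where the localization, hence Lemma~3, enters) yields the stated $\|F'(x_*)-J\|\le C\|Jx_*\|\le C\delta^{2.5}$, and I would remark that already the cruder bound $C\|r_*\|_\infty\sim C\delta^2$ is more than enough for the way this term is used in Lemma~2 (where it is paired with a time factor $T\sim k^{-(1+\eta)}$). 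For the second, $F''(x_*)=JH'''(x_*)$ is $J$ composed with the diagonal operator of entries $V'''(r_{*,n})$, and $\sup_n|V'''(r_{*,n})|\le|V'''(0)|+C\|r_*\|_\infty=O(1)$ since $V$ is smooth and $r_*$ is small; combined with $\|J\|\le 2$ this gives $\|F''(x_*)\|\le 1+\sup_n|V'''(r_{*,n})|\le C$.

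The third estimate carries the real content. Since $F=JH'$ and $F_*=JH_*'$ share the same symplectic operator $J$, $F(x_*)-F_*(x_*)=J\bigl(H'(x_*)-H_*'(x_*)\bigr)$ with $H'(x_*)-H_*'(x_*)=\bigl(V'(r_{*,n})-V_*'(r_{*,n}),0\bigr)$. By the rescaling carried out in Sec.~III the granular and Toda potentials agree through quadratic order, so $V'(r)-V_*'(r)=\tfrac{1}{6}\bigl(V'''(0)-V_*'''(0)\bigr)r^3+\mathcal{O}(r^4)$ with nonvanishing cubic coefficient; hence $H'(x_*)-H_*'(x_*)$ is entrywise $O(r_{*,n}^3)$. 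I would then factor the difference of cubes generated by $J$, $r_{*,n-1}^3-r_{*,n}^3=(r_{*,n-1}-r_{*,n})(r_{*,n-1}^2+r_{*,n-1}r_{*,n}+r_{*,n}^2)$, bound the second factor uniformly by $3\|r_*\|_\infty^2$ and the first in $\ell^2$ by $\|Jx_*\|$, obtaining $\|F(x_*)-F_*(x_*)\|\le C\|Jx_*^3\|\le C\|Jx_*\|\,\|x_*\|_\infty^2\le C\,\delta^{2.5}(\delta^2)^2=C\delta^{6.5}$; the improvement over a naive $\|r_*^3\|_{\ell^2}$ estimate is exactly the factor of $\delta$ bought by the outer difference operator. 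The hard part is not this algebra but justifying that $J$ genuinely earns that extra power of $\delta$ when applied to quantities built from $x_*$ \emph{during} the collision, i.e.\ that the $2$-soliton and the residual stay slowly varying throughout the interaction region, which is why the proposition is organized so that Lemma~1 rests on Lemma~3; any roughening of the profile during the collision would degrade every bound above. The companion claim that the exponent $5.5$ (equivalently $A^{2.75}$ for the measured error) is sharp is, as the main text notes, a statement about the impossibility of a further normal-form matching between FPU and Toda, and lies outside this lemma.
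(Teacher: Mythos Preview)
Your argument is essentially the paper's own: the paper's entire justification of Lemma~1 is the single sentence recording that $x_n\sim Ck^2e^{-kn}$ gives $\|x\|^2\sim k^4/(1-e^{-2k})\sim k^3$ together with the analogous scaling for $\|J^n x^m\|$, which is precisely the amplitude/width/discrete-difference bookkeeping you spell out via the operator identifications $F'=JH''$, $F''=JH'''$, the Taylor remainders for $V''-1$ and $V'-V_*'$, and the difference-of-cubes factoring. Your appeal to Lemma~3 for time-uniformity through the collision also matches how the paper organizes the three lemmas supporting the Proposition.
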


Here the half powers of $k$ arise because of the slow decay in the tails of the solitary wave.  More explicitly a solitary wave satisfying $x_n \sim  C k^2 e^{-kn}$ satisfies $\|x\|^2 \le \frac{Ck^4}{1-e^{-2k}} \sim k^3$ and similarly $\|J^n x^m\| \le C k^{m+\frac{n}{2}}$.

\begin{lemma}
Let the following be given: A Hilbert space $\mathcal{H}$ with inner product $\langle \cdot, \cdot \rangle$ and associated norm $\| \cdot \|$, an open subset $U \subset H$, $C^3$ functions $F$ and $F_*$ from $U$ to $\mathcal{H}$, with identical and symplectic linear part, i.e.  $J := F'(0) = F_*'(0)$ satisfying $\langle Jv,v\rangle = 0$ for all $v \in \mathcal{H}$.    

There exist positive constants $\delta_0$ and $C_0$ such that if the estimates
\[ \sup_{t \in [0,T]} \|F'(x_*)-J\|_{H \to H} < \frac{\delta_0}{T},\] 
\[\sup\{ \|F''(v)\|_{H \to \mathcal{L}(H)} \; | \; \|v\|_H \le 2\|x_*\|\} < C_0 \]
and
\[ \sup_{t \in [0,T]} \|F(x_*)-F_*(x_*)\| < \frac{1}{C_0 \delta_0^2 T^2} \]
hold for some solution $\dot{x}_* = F_*(x_*)$ on some time interval $[0,T]$, then the following hold:

Any solution $y$ to the differential equation $\dot{y} = F(y)$ whose initial condition satisfies 
\[ \|y(0) - x_*(0)\| < \frac{1}{\sup_{t \in [0,T]} \|F_*(x_*(t))-F(x_*(t))\|} \]
in fact satisfies
\[ \|y(t) - x_*(t) \| < \frac{C_1}{\sup_{t \in [0,T]} \| F(x_*)-F_*(x_*)\|} \]
\end{lemma}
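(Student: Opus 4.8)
The plan is to prove the stated confinement bound by a Gronwall/continuity (bootstrap) argument on the difference $z(t) := y(t) - x_*(t)$, organized around the fact that the common linear part $J$ is skew with respect to the inner product. Write $\sigma := \sup_{t\in[0,T]}\|F(x_*(t)) - F_*(x_*(t))\|$ for the residual. The goal is to show that the ball of radius $C_1/\sigma$ about $x_*(t)$ is essentially forward-invariant under the $F$-flow on $[0,T]$ once the three smallness hypotheses hold, so that any trajectory launched from inside $\|z(0)\| < 1/\sigma$ cannot escape $\|z(t)\| < C_1/\sigma$. The choice of the same scale $1/\sigma$ on both ends is natural here: a skew (norm-preserving) linear flow maps a ball into itself, and the perturbations only enlarge it by a bounded factor, which will become $C_1$.

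First I would derive the evolution of $z$. Since $\dot y = F(y)$ and $\dot x_* = F_*(x_*)$, a second-order Taylor expansion of $F$ about $x_*$ gives
\[ \dot z = F(x_*+z) - F_*(x_*) = \big[F(x_*)-F_*(x_*)\big] + F'(x_*)\,z + \mathcal{R}(x_*,z), \]
with $\|\mathcal{R}(x_*,z)\| \le \tfrac{C_0}{2}\|z\|^2$ so long as the segment from $x_*$ to $x_*+z$ stays inside $\{\|v\|\le 2\|x_*\|\}$, where hypothesis two applies. Splitting $F'(x_*)=J+(F'(x_*)-J)$ and differentiating the energy $\tfrac12\|z\|^2$, the structural input is $\langle Jz,z\rangle = 0$, which removes the leading order-one linear term outright:
\[ \tfrac12 \frac{d}{dt}\|z\|^2 = \langle F(x_*)-F_*(x_*),z\rangle + \langle (F'(x_*)-J)z, z\rangle + \langle \mathcal{R}(x_*,z), z\rangle. \]
Applying Cauchy--Schwarz to the first term, hypothesis one to the second, and the quadratic bound to the third yields the scalar Riccati-type inequality $\dot w \le \sigma + a\,w + \tfrac{C_0}{2}w^2$ for $w:=\|z\|$, where $a:=\sup_t\|F'(x_*)-J\| < \delta_0/T$. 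The skew-symmetry is exactly what demotes the linear coefficient from $O(1)$ to $O(a)$, and this is what makes a long-time estimate possible: integrating the linear part contributes only the mild amplification $e^{aT}\le e^{\delta_0}$, which I would take as the origin of $C_1$.

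The main obstacle is closing the bootstrap against the quadratic term $\tfrac{C_0}{2}w^2$. I would argue by continuity: let $[0,t_*)$ be the maximal subinterval on which $w < C_1/\sigma$, suppose for contradiction $t_* \le T$, and bound the accumulated outward drift $\int_0^{t_*}\!\big(\sigma + a w + \tfrac{C_0}{2}w^2\big)\,dt$ using the running bound $w < C_1/\sigma$. The additive residual contributes a harmless $\sigma T$, the linear term the factor $e^{\delta_0}$, and the delicate piece is the quadratic feedback; hypothesis three, $\sigma < 1/(C_0\delta_0^2 T^2)$, is the quantitative smallness budget intended to keep this feedback dominated by the confinement radius while simultaneously guaranteeing the trajectory never leaves $\{\|v\|\le 2\|x_*\|\}$, so that the $F''$ bound underlying $\mathcal{R}$ stays legitimate. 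Verifying that these three contributions sum to strictly less than $C_1/\sigma$ — so that $w$ cannot reach the boundary and hence $t_* > T$ — is the crux of the argument; everything else is bookkeeping. The hypothesis $\|z(0)\| < 1/\sigma$ then places the initial point strictly inside the trapping ball, and combining it with the closed bootstrap gives $\|y(t)-x_*(t)\| < C_1/\sigma$ for all $t\in[0,T]$, which is the assertion.
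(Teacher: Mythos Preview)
Your approach is essentially the same as the paper's: Taylor-expand $F$ about $x_*$, use the skew-symmetry $\langle Jz,z\rangle=0$ to eliminate the order-one linear growth in the energy, and close a bootstrap against the residual, the $(F'(x_*)-J)$-drift, and the quadratic remainder. The only organizational difference is that the paper first rescales the error via $y = x_* + \eps Y$ and tracks the quantity $E = \tfrac12\|Y\|^2 + 1$ through a multiplicative Gronwall bound $E(t)\le E(0)\exp(\sup_t\|\dot Y - JY\|\,t)$, whereas you work directly with $w=\|z\|$ and the Riccati-type inequality $\dot w \le \sigma + aw + \tfrac{C_0}{2}w^2$; this is purely cosmetic and the bootstrap closes the same way in either formulation.
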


\begin{proof}
Introduce the new variable $Y$ by the equation $y = x_* + \eps Y$.  The proof will proceed by deriving first an evolution equation for $Y$, and then an evolution equation for $E:= 1+ \langle Y,Y\rangle$, using a bootstrapping argument to show that if $E(0)$ is not too large, then $E(t)$ remains not too large for $t \in [0,T]$.

We begin by computing
\[ \dot{Y} = \eps^{-1}(F(x_*+\eps Y) - F_*(x_*)) = JY + (F'(x_*)-J)Y + \eps \int_0^1\int_0^1 F''(x_* + \eps \rho_1\rho_2 Y)(Y,Y)d\rho_1d\rho_2 - \eps^{-1}(F_*(x_*) - F(x_*)) \]
The first term is the linearization about zero.  The second term is the linear part owing to the fact that the linearization about $x_*$ is not equal to the linearization about zero.  The third term incorporates all of the quadratic and higher order terms in $F$ and the fourth term owes to the fact that $x_*$ and $y$ satisfy different DEs.

In particular we have 
\begin{equation} \|\dot{Y}-JY\| \le \frac{\delta_0}{T}\|Y\| + \eps C_0 \|Y\|^2 + \eps^{-1}\| F(x_*)-F_*(x_*)\| \label{Y'-JY}\end{equation}

Define the almost conserved quantity $E = \frac{1}{2}\|Y\|^2 + 1$ and compute $\dot{E} = \langle \dot{Y}, Y \rangle = \langle \dot{Y}-JY,Y\rangle$.  

Thus $ |\dot{E}| \le \|\dot{Y}-JY\| \|Y\| \le \|\dot{Y}-JY\|(E+1)$ and hence
\[ E(t) \le E(0) e^{\sup_{t \in [0,T]} \|\dot{Y}-JY\| t} \]
for $0 \le t \le T$.
In light of \eqref{Y'-JY} we see that 
\[ E(t) \le 2 \mathrm{exp}\left( \delta_0 \sup_t \|Y(t)\| + C_0 T\eps \|Y\|^2  + T\eps^{-1} \| F(x_*)-F_*(x_*)\|\right) \]
Now let $\tau$ be the largest time for which $\sup_{t \in [0,\tau]} E(t) \le 2 \lceil e^4 \rceil$.  The hypotheses of the lemma guarantee that each of the terms $\delta_0 \|Y(t)\|$, $C_0 T\eps \|Y(t)\|^2$, $\delta_0 \|Y(t)\|$ and $T\eps^{-1}\| F(x_*)-F_*(x_*)\|$ are bounded above by one, hence the exponential of the sum is bounded above by $e^4$.  In particular $E(t) < E_0 e^4$ for as long as $E(t) < 2\lceil e^4 \rceil$ and also $t \in [0,T]$. Thus the inequality $E(t) < E_0 e^4$ holds for all $t \in [0,T]$.

\end{proof}

\begin{lemma}
Let $r$ denote a Toda two-soliton solution.  There is a constant $C$
% independent of $k$ 
such that $r_{n+1}-r_n < C k^3 e^{-C k n}$ with the constant $C$ uniform in $k$ 
and in $|t| < k^{-1}$.  
%This computation might be slightly less tedious than computing 
%(r_{n+1}-r_n)/r_n.  .
\end{lemma}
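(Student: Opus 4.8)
\emph{Strategy and setup.} The plan is to exploit the explicit $\tau$-function form of the Toda two-soliton, which reduces the claim to elementary estimates on a single log-sum-exp function. For both the co- and counter-propagating branches the displacement can be written (up to an index shift) as $x_n = \ln(\tau_{n-1}/\tau_n)$, where $\tau_n = \sum_j c_j e^{\mu_j n - \nu_j t}$ is a sum of at most four exponentials whose $n$-slopes $\mu_j$ lie in an interval $[\mu_-,\mu_+]$ of length $L := \mu_+-\mu_- = O(k)$ --- explicitly $\mu_j \in \{0, 2k_1, 2k_2, 2(k_1+k_2)\}$ for \eqref{toda_aux} and $\mu_j \in \{\pm k_1, \pm k_2\}$ for \eqref{TodaTwoSoliton2} --- and whose coefficients $c_j$ are all strictly positive. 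Positivity of the $c_j$ follows from the elementary inequalities $\sinh(k_1+k_2) \ge \sinh k_1 + \sinh k_2$ and $|\sinh(k_1-k_2)| \le |\sinh k_1 - \sinh k_2|$, which make $A_1,A_2,A_{12} > 0$ (resp. $B > 0$) and give $A_1 A_2 = \Theta(1)$. Hence $\tau_n$ stays bounded below by a fixed positive constant for all $n,t$, and $\sigma_n := \ln\tau_n$ is the restriction to $\mathbb{Z}$ of the smooth, convex-in-$n$ function $\sigma(x) = \ln\sum_j c_j e^{\mu_j x - \nu_j t}$. Since the strain $r_n = x_{n-1}-x_n$ equals (a shift of) the discrete second difference of $\sigma$, the quantity $r_{n+1}-r_n$ equals (a shift of) the discrete third difference of $\sigma$.

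\emph{Time-uniform amplitude bounds.} The derivatives of the log-sum-exp function are the cumulants of the probability weights $p_j(x) \propto c_j e^{\mu_j x}$ on the slopes: $\sigma'' = \mathrm{Var}(\mu) \in [0, L^2/4]$ by the bounded-range (Popoviciu) inequality, and $|\sigma'''| \le L\,\mathrm{Var}(\mu) \le L^3/4$. Writing an $m$-th discrete difference as an $m$-fold integral of $\sigma^{(m)}$ over a unit box, one has $|\Delta^m\sigma| \le \sup|\sigma^{(m)}|$, whence $0 \le r_n \le \sup_x \sigma''(x) \le Ck^2$ --- in particular no wavenumber exceeding $O(k)$ is present, the qualitative content used elsewhere --- and $|r_{n+1}-r_n| \le \sup_x|\sigma'''(x)| \le Ck^3$. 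These bounds hold for every $n \in \mathbb{Z}$ and every $t$.

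\emph{Exponential tail decay.} Let $\mu_\star = \mu_+$ be the largest slope, with coefficient $c_\star$, and set $\mathrm{gap} = \mu_\star - \max_{j\ne\star}\mu_j$; in every case $\mathrm{gap}$ is a fixed positive multiple of $k$ (depending on the amplitude ratio). Factor $\sigma_n = (\mu_\star n - \nu_\star t + \ln c_\star) + \ln(1+\varepsilon_n)$ with $\varepsilon_n = \sum_{j\ne\star}(c_j/c_\star)e^{(\mu_j-\mu_\star)n + (\nu_\star-\nu_j)t}$. The affine part is annihilated by third differencing, so $r_{n+1}-r_n$ is a third difference of $\ln(1+\varepsilon_n)$. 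Each summand of $\varepsilon_n$ decays in $n$ at rate at least $\mathrm{gap}$; letting $n_\star(t)$ be the least integer for which $\varepsilon_n \le \tfrac12$ throughout $n \ge n_\star(t)$, geometric summation gives $|r_{n+1}-r_n| \le C\,\mathrm{gap}^3\,\varepsilon_n \le Ck^3 e^{-c k(n - n_\star(t))}$ for $n \ge n_\star(t)$. Now $n_\star(t)$ is, up to an $O(k^{-1})$ additive error, the point where the slowest subleading ratio becomes of order one, i.e. $n_\star(t) = (\beta_i/k_i)t + O(k^{-1})$, and the weakly supersonic Toda speed satisfies $\beta_i/k_i = 1 + O(k^2)$; hence $|n_\star(t)| \le C(1+|t|)$, so for $|t| < k^{-1}$ we get $k|n_\star(t)| \le C$ and $e^{ck\,n_\star(t)} \le C$. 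Combining this with the amplitude bound of the previous step (which covers $n \le n_\star(t)$, where $e^{-ckn} \ge e^{-ckn_\star(t)}$ is bounded below) and with the trivial vanishing of $r_{n+1}-r_n$ as $n \to -\infty$, one obtains $|r_{n+1}-r_n| \le Ck^3 e^{-ckn}$ for all $n \in \mathbb{Z}$ and all $|t| < k^{-1}$, which is (more than) the claim.

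\emph{Main obstacle.} The one genuinely delicate point is time-uniformity over the long window $|t| < k^{-1}$: the threshold $n_\star(t)$ (equivalently, the soliton centers) must be tracked and the exponential rate must not degrade. This works only because the soliton speed exceeds the sonic speed $1$ by just $O(k^2)$, so over time $k^{-1}$ the centers move $O(k^{-1})$ --- exactly one decay length --- at the price of a bounded multiplicative factor. A secondary technicality is that near the collision all four exponentials of $\tau_n$ are comparable, but this perturbs $\sigma_n$ only on a bounded $n$-window about the collision site and leaves the $n\to+\infty$ tail analysis intact; and the positivity of the $\tau$-function coefficients --- needed to exclude singularities and keep $\sigma$ smooth --- is just the pair of $\sinh$ inequalities noted above.
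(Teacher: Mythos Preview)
Your argument is correct and is precisely the ``direct computation of the second and third differences of the quantity $S_n$'' that the paper's one-line proof invokes; both of you are taking discrete differences of the $\tau$-function exponent $\sigma_n=\ln\tau_n$. Your log-sum-exp/cumulant packaging (Popoviciu for $\sigma''$, the range-times-variance bound $|\sigma'''|\le L\cdot\mathrm{Var}$) is an elegant way to extract the uniform $O(k^3)$ amplitude estimate without brute-force algebra, and your explicit tracking of the threshold $n_\star(t)$ via the weakly supersonic expansion $\beta_i/k_i=1+O(k^2)$ makes precise the time-uniformity over $|t|<k^{-1}$ that the paper leaves entirely implicit.
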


This follows
from a direct computation of the second and third differences
of the quantity $S_n$ given in Eq.~(\ref{toda_aux}).

\end{document}